\definecolor{darkred}{rgb}{0.8,0.1,0.1}
\definecolor{darkblue}{rgb}{0.0,0.0,0.55}
\newtheorem{definition}{Definition}
\newtheorem{proposition}[definition]{Proposition}
\newtheorem{theorem}[definition]{Theorem}
\newtheorem{corollary}[definition]{Corollary}
\def\squareforqed{\hbox{\rlap{$\sqcap$}$\sqcup$}}
\def\qed{\ifmmode\squareforqed\else{\unskip\nobreak\hfil
\penalty50\hskip1em\null\nobreak\hfil\squareforqed
\parfillskip=0pt\finalhyphendemerits=0\endgraf}\fi}
\def\endenv{\ifmmode\;\else{\unskip\nobreak\hfil
\penalty50\hskip1em\null\nobreak\hfil\;
\parfillskip=0pt\finalhyphendemerits=0\endgraf}\fi}
\newenvironment{proof}{\noindent \textbf{{Proof~} }}{\qed}
\newenvironment{remark}{\noindent \textbf{{Remark~}}}{}
\mathchardef\ordinarycolon\mathcode`\:
\def\vcentcolon{\mathrel{\mathop\ordinarycolon}}
\def\resetMathstrut@{%
    \setbox\z@\hbox{%
        \mathchardef\@tempa\mathcode`\[\relax
        \def\@tempb##1"##2##3{\the\textfont"##3\char"}%
        \expandafter\@tempb\meaning\@tempa \relax
    }%
    \ht\Mathstrutbox@\ht\z@ \dp\Mathstrutbox@\dp\z@}
\newcommand{\nc}{\newcommand}
\nc{\rnc}{\renewcommand}
\nc{\beg}{\begin{equation}}
\nc{\eeq}{{\end{equation}}}
\nc{\beqa}{\begin{eqnarray}}
\nc{\eeqa}{\end{eqnarray}}
\nc{\lbar}[1]{\overline{#1}}
\nc{\bra}[1]{\langle#1|}
\nc{\ket}[1]{|#1\rangle}
\nc{\ketbra}[2]{|#1\rangle\!\langle#2|}
\nc{\braket}[2]{\langle#1|#2\rangle}
\nc{\proj}[1]{| #1\rangle\!\langle #1 |}
\nc{\avg}[1]{\langle#1\rangle}
\nc{\Rank}{\operatorname{Rank}}
\nc{\smfrac}[2]{\mbox{$\frac{#1}{#2}$}}
\nc{\tr}{\operatorname{Tr}}
\nc{\ox}{\otimes}
\nc{\dg}{\dagger}
\nc{\dn}{\downarrow}
\nc{\cA}{{\cal A}}
\nc{\cB}{{\cal B}}
\nc{\cC}{{\cal C}}
\nc{\cD}{{\cal D}}
\nc{\cE}{{\cal E}}
\nc{\cF}{{\cal F}}
\nc{\cG}{{\cal G}}
\nc{\cH}{{\cal H}}
\nc{\cI}{{\cal I}}
\nc{\cJ}{{\cal J}}
\nc{\cK}{{\cal K}}
\nc{\cL}{{\cal L}}
\nc{\cM}{{\cal M}}
\nc{\cN}{{\cal N}}
\nc{\cO}{{\cal O}}
\nc{\cP}{{\cal P}}
\nc{\cQ}{{\cal Q}}
\nc{\cR}{{\cal R}}
\nc{\cS}{{\cal S}}
\nc{\cT}{{\cal T}}
\nc{\cV}{{\cal V}}
\nc{\cX}{{\cal X}}
\nc{\cY}{{\cal Y}}
\nc{\cZ}{{\cal Z}}
\nc{\cW}{{\cal W}}
\nc{\csupp}{{\operatorname{csupp}}}
\nc{\qsupp}{{\operatorname{qsupp}}}
\nc{\var}{{\operatorname{var}}}
\nc{\rar}{\rightarrow}
\nc{\lrar}{\longrightarrow}
\nc{\polylog}{{\operatorname{polylog}}}
\nc{\wt}{{\operatorname{wt}}}
\nc{\av}[1]{{\left\langle {#1} \right\rangle}}
\nc{\supp}{{\operatorname{supp}}}
\def\di{\diamondsuit}
\def\ve{\varepsilon}
\def\o{\omega}
\def\G{\Gamma}
\def\O{\Omega}
\nc{\RR}{{{\mathbb R}}}
\nc{\CC}{{{\mathbb C}}}
\nc{\FF}{{{\mathbb F}}}
\nc{\NN}{{{\mathbb N}}}
\nc{\ZZ}{{{\mathbb Z}}}
\nc{\PP}{{{\mathbb P}}}
\nc{\QQ}{{{\mathbb Q}}}
\nc{\UU}{{{\mathbb U}}}
\nc{\EE}{{{\mathbb E}}}
\nc{\id}{{\operatorname{id}}}
\nc{\CHSH}{{\operatorname{CHSH}}}
\nc{\be}{\begin{equation}}
\nc{\ee}{{\end{equation}}}
\nc{\bea}{\begin{eqnarray}}
\nc{\eea}{\end{eqnarray}}
\nc{\Hom}[2]{\mbox{Hom}(\CC^{#1},\CC^{#2})}
\nc{\rU}{\mbox{U}}
\nc{\ob}[1]{#1}
\nc{\SEP}{{\text{SEP}}}
\nc{\NS}{{\text{NS}}}
\nc{\LOCC}{{\text{LOCC}}}
\nc{\PPT}{{\text{PPT}}}
\nc{\EXT}{{\text{EXT}}}
\nc{\Sym}{{\operatorname{Sym}}}
\nc{\ERLO}{{E_{\text{r,LO}}}}
\nc{\ERLOCC}{{E_{\text{r,LOCC}}}}
\nc{\ERPPT}{{E_{\text{r,PPT}}}}
\nc{\ERLOCCinfty}{{E^{\infty}_{\text{r,LOCC}}}}
\nc{\Aram}{{\operatorname{\sf A}}}
\newcommand{\eps}{\varepsilon}
\begin{document}
\title{Semidefinite programming converse bounds for quantum communication}
 \author{Xin Wang$^{1,2}$}
 \email{xwang93@umd.edu}
\author{Kun Fang$^{1}$}
 \email{kun.fang-1@student.uts.edu.au}
 \author{Runyao Duan$^{3,1}$}
 \email{runyao.duan@uts.edu.au}
 \affiliation{$^1$Centre for Quantum Software and Information, Faculty of Engineering and Information Technology, University of Technology Sydney, NSW 2007, Australia}
 \affiliation{$^2$Joint Center for Quantum Information and Computer Science, University of Maryland, College Park, Maryland 20742, USA}
%\affiliation{$^3$UTS-AMSS Joint Research Laboratory for Quantum Computation and Quantum Information Processing, Academy of Mathematics and Systems Science, Chinese Academy of Sciences, Beijing 100190, China}
\affiliation{$^3$
Institute for Quantum Computing, Baidu Inc., Beijing 100193, China}

\thanks{
This work is an extended version of the previous work in~\cite{Wang2016a}.}
% A preliminary version of this paper was presented as a contributed talk at the IEEE International Symposium on Information Theory in 2016 and the Asian Quantum Information Science Conference in 2017.}

\begin{abstract}
We derive several efficiently computable converse bounds for quantum communication over quantum channels in both the one-shot and asymptotic regime.  First, we derive one-shot semidefinite programming (SDP) converse bounds on the amount of quantum information that can be transmitted over a single use of a quantum channel, which improve the previous bound from [Tomamichel/Berta/Renes, \emph{Nat. Commun.} \textbf{7}, 2016]. As applications, we study quantum communication over depolarizing channels and amplitude damping channels with
finite resources. Second, we find an SDP strong converse bound for the quantum capacity of an arbitrary quantum channel, which means the fidelity of any sequence of codes with a rate exceeding this bound will vanish exponentially fast as the number of channel uses increases. Furthermore, we prove that the SDP strong converse bound improves the \emph{partial transposition bound} introduced by Holevo and Werner. Third, we prove that this SDP strong converse bound is equal to the so-called \emph{max-Rains information}, which is an analog to the Rains information introduced in [Tomamichel/Wilde/Winter, \emph{IEEE Trans. Inf. Theory} \textbf{63}:715, 2017]. Our SDP strong converse bound is weaker than the Rains information, but it is efficiently computable for general quantum channels.
\end{abstract}

\maketitle
%%%%%%%%%%%%%%%%%%%%%%%%%%%%
\section{Introduction}
\subsection{Background}
The reliable transmission of quantum information via noisy quantum channels is a fundamental problem in quantum information theory.
The quantum capacity of a noisy quantum channel is the optimal rate at which it can convey quantum bits (qubits) reliably over asymptotically many uses of the channel.  The theorem by Lloyd, Shor, and Devetak (LSD)~\cite{Lloyd1997,Shor2002a,Devetak2005a} and the work in Refs.~\cite{Schumacher1996a,Barnum2000,Barnum1998} show
that the quantum capacity is equal to the regularized coherent information.  In general, the regularization of coherent information is necessary since the coherent information can be superadditive.
The quantum capacity is notoriously difficult to evaluate since it is characterized by a multi-letter, regularized expression and it is not even known to be computable~\cite{Cubitt2015,Elkouss2015}. Even for the qubit depolarizing channel, the quantum capacity is still unsolved despite substantial effort in the past two decades (see e.g.,~\cite{DiVincenzo1998a,Fern2008,Smith2007,Smith2008a,Sutter2014,Leditzky2017,Leung2015a}). Our understanding of quantum capacity is quite limited, and we even do not know the exact threshold value of the depolarizing noise where the capacity goes to zero.

The converse part of the LSD theorem states that if the rate exceeds the quantum capacity, then the fidelity of any coding scheme cannot approach one in the limit of many channel uses. A strong converse property leaves no room for the trade-off between rate and error, i.e.,  the error probability vanishes in the limit of many channel uses whenever the rate exceeds the capacity. For classical channels,  Wolfowitz~\cite{Wolfowitz1978} established the strong converse property for the classical capacity.
For quantum channels, the strong converse property for the classical capacity was confirmed for several classes of channels~\cite{Ogawa1999,Winter1999,Koenig2009,Wilde2013a,Wilde2014a,Wang2016g}. 

For quantum communication, the strong converse property was studied in Ref.~\cite{Tomamichel2015a} and the strong converse of generalized dephasing channels was established~\cite{Tomamichel2015a}. Given an arbitrary quantum channel, a previously known efficiently computable strong converse bound on the quantum capacity for general channels is the partial transposition bound~\cite{Holevo2001,Muller-Hermes2015}. Recently, the Rains information~\cite{Tomamichel2015a} was established to be a strong converse bound for quantum communication. 
There are other known upper bounds for quantum capacity~\cite{Smith2008a,Sutter2014,Gao2015a,Bruß1998, Cerf2000, Wolf2007,Smith2008b,Leditzky2017} and most of them require specific settings to be computable and relatively tight. 

Moreover, in a practical setting, the number of quantum channel uses is finite, and one has to make a trade-off between the transmission rate and error tolerance. For both practical and theoretical interest, it is important to optimize the trade-off for the rate and infidelity of quantum communication with finite resources.  The study of this finite blocklength setting has recently attracted great interest in classical information theory (e.g.,~\cite{Hayashi2009,Polyanskiy2010}) as well as in quantum information theory (e.g.,~\cite{Tomamichel2013a,Wang2012,Leung2015c,Matthews2014,Tomamichel2015b,Beigi2015,Tomamichel2016}). 

\subsection{Summary of results}
In this paper, we focus on quantum communication via noisy quantum channels in both the one-shot and asymptotic settings.  We study the quantum capacity assisted with positive partial transpose preserving (PPT) and no-signalling (NS) codes ~\cite{Leung2015c}. The PPT codes include all the operations that can be implemented by local operations and classical communication while the NS codes are potentially stronger than entanglement-assisted codes.

In section \ref{Converse bounds for non-asymptotic quantum communication}, we consider the non-asymptotic quantum capacity. We first introduce the one-shot $\ve$-infidelity quantum capacity with PPT-assisted (and NS-assisted) codes and characterize it as an optimization problem. Based on this optimization, we provide semidefinite programming (SDP) bounds to evaluate the one-shot capacity with a given infidelity tolerance. Compared with the previous efficiently computable converse bound given in Ref.~\cite{Tomamichel2016}, we show that our SDP converse bounds are tighter in general and can be strictly tighter for basic channels such as the qubit amplitude damping channel and the qubit depolarizing channel.

In section \ref{Strong converse bound for quantum communication}, we investigate quantum communication via quantum channels in the asymptotic setting. We first present an SDP strong converse bound, denoted as $Q_\G$, on the quantum capacity for a general quantum channel. This bound has some nice properties, such as additivity with respect to tensor products of quantum channels. In particular, $Q_\G$ is a channel analog of the SDP entanglement measure introduced in Ref.~\cite{Wang2016}, and we show here that it
is equal to the so-called max-Rains information. This result implies that $Q_\G$ is no better, in general, as an upper bound on quantum capacity than the Rains information~\cite{Tomamichel2015a}. However, $Q_\G$ is efficiently computable for general quantum channels. Finally, we show that $Q_\G$ improves the partial transposition bound \cite{Holevo2001}.

%%%%%%%%%%%%%%%%%%%%%%%%%%%%
\section{Preliminaries}\label{sec:pre}
In the following, we will frequently use symbols such as $A$ (or $A'$) and $B$ (or $B'$) to denote (finite-dimensional) Hilbert spaces associated with Alice and Bob, respectively. We use $d_A$ to denote the dimension of system $A$. The set of linear operators acting on $A$ is denoted by $\cL(A)$. The set of positive operators acting on $A$ is denoted by $\cP(A)$. The set of positive operators with unit trace is denoted by $\cS(A)$, while the set of positive operators with trace no greater than $1$ is denoted by $\cS_\leq(A)$. We usually write an operator with a subscript indicating the system that the operator acts on, such as $M_{AB}$, and write $M_A:=\tr_B M_{AB}$. Note that for a linear operator $X\in\cL(A)$, we define $|X|=\sqrt{X^\dagger X}$, where $X^\dagger$ is the adjoint operator of $X$, and the trace norm of $X$ is given by $\|X\|_1=\tr |X|$. 
A quantum channel $\cN_{A'\to B}$ is simply a completely positive (CP) and trace-preserving (TP) linear map from $\cL(A')$ to $\cL(B)$.  The Choi-Jamio\l{}kowski matrix of $\cN$ is given by $J_{\cN}=\sum_{ij} \ketbra{i_A}{j_{A}} \ox \cN(\ketbra{i_{A'}}{j_{A'}})$, where $\{\ket{i_A}\}$ and $\{\ket{i_{A'}}\}$ are orthonormal bases on isomorphic Hilbert spaces $A$ and $A'$, respectively. 

A positive semidefinite (PSD) operator  $E \in \cL(A \ox B)$ is
said to be a positive
partial transpose operator (PPT) if $E^{T_{B}}\geq 0$, where ${T_{B}}$ means the partial transpose with respect to the party
$B$, i.e., $(\ketbra{ij}{kl})^{T_{B}}=\ketbra{il}{kj}$.
A bipartite operation $\Pi_{A_i B_i\to A_o B_o}$ is PPT if and only if its Choi-Jamio\l{}kowski matrix  is PPT~\cite{Rains2001}. 
%The constraints of PPT and NS can be mathematically characterized as follows.  A bipartite operation $\Pi_{A_iB_i\to A_oB_o}$ is no-signalling and PPT if and only if its Choi-Jamio\l{}kowski matrix $Z_{A_iB_iA_oB_o}$ satisfies~\cite{Leung2015c}:
%\begin{equation}\label{code constraints}
%\begin{split}
%Z_{A_iB_iA_oB_o}\ge 0,  &\quad (\text{CP})\\
%Z_{A_iB_i} = \1_{A_i B_i}, &\quad (\text{TP})\\
%Z_{A_iB_iA_oB_o}^{T_{B_iB_o}}\ge 0, &\quad(\text{PPT}) \\
%Z_{A_iB_iB_o}=\frac{\1_{A_i}}{d_{A_i}}\ox Z_{B_iB_o}, &\quad (A \not\rightarrow B)\\
%Z_{A_iB_iA_o}=\frac{\1_{B_i}}{d_{B_i}}\ox Z_{A_iA_o}, &\quad (B \not\rightarrow A) \\
%\end{split}\end{equation}
%where the five lines correspond to characterize that $\Pi$ is CP, TP, PPT, no-signalling from A to B, no-signalling from B to A, respectively. 
The set of PPT operations include all operations that can be implemented by local operations and classical communication. 
%And the no-signalling constraint means that Alice and Bob cannot use the bipartite operation to communicate classical information. 
%Note that the mathematical structure of quantum no-signalling  correlations (or NS codes) was also studied in Ref.~\cite{Duan2016}.

Semidefinite programming~\cite{Vandenberghe1996} is a useful tool in the study of quantum information and computation with many applications.  
In this work, we use the CVX software
\cite{Grant2008} and QETLAB (A Matlab Toolbox for Quantum
Entanglement)~\cite{NathanielJohnston2016}
to solve SDPs.

%%%%%%%%%%%%%%%%%%%%%%%%%%%%%%%%%%%%%%%%%%%%%%%%%%%%%%%%%%
\section{Converse bounds for non-asymptotic quantum communication}
\label{Converse bounds for non-asymptotic quantum communication}
\subsection{One-shot $\ve$-error capacity and finite resource trade-off}

In this section, we are interested in quantum communication via noisy channels with finite resources. Suppose Alice shares a maximally entangled state $\Phi_{A_iR}$ with a reference system $R$ to which she has no access.  The goal is to design a quantum coding protocol such that Alice can transfer her share of this maximally entangled state to Bob with very high fidelity. To this end, Alice first performs an encoding operation $\cE_{A_i\to A_o}$ on system $A_i$ and then transmits the prepared state through the channel $\cN_{A_o\to B_i}$. The resulting state turns out to be $\cN_{A_o\to B_i}\circ \cE_{A_i\to A_o}(\Phi_{A_iR})$. 
After Bob receives the state, he performs a decoding operation $\cD_{B_i\to B_o}$ on system $B_i$, where $B_o$ is some system of the same dimension as $A_i$. The final resulting state will be $\rho_{final}=\cD_{B_i\to B_o} \circ \cN_{A_o\to B_i}\circ \cE_{A_i\to A_o}(\Phi_{A_iR})$. The target of quantum coding is to optimize the fidelity between $\rho_{final}$ and the maximally entangled state $\Phi_{A_iR}$.

One could further imagine the coding protocol as a general super-operator $\Pi_{A_iB_i\to A_oB_o}$. The authors of Ref.~\cite{Chiribella2008} showed that a two-input and two-output CPTP map $\Pi_{A_iB_i\to A_oB_o}$ sends any CPTP map $\cN_{A_o\to B_i}$ to another CPTP map $\cM_{A_i\to B_o}$ if and only if $\Pi_{A_iB_i\to A_oB_o}$ is {\rm B} to {\rm A} no-signalling (see also \cite{Duan2016}). Such bipartite operation $\Pi$ is called deterministic super-operator or semi-causal quantum operation.
Let $\cM_{A_i\to B_o}$ denote the resulting composition channel of a deterministic super-operator $\Pi_{A_iB_i\to A_oB_o}$ and a channel $\cN_{A_o\to B_i}$. We write $\cM=\Pi\circ\cN$ for simplicity. Then there exist CPTP maps $\cE_{A_i\to A_oC}$ and $\cD_{B_iC\to B_o}$, where $C$ is a quantum register, such that \cite{Eggeling2002a,Chiribella2008,Duan2016}
\begin{equation}
\cM_{A_i\to B_o} = \cD_{B_iC\to B_0} \circ \cN_{A_o \to B_i} \circ \cE_{A_i\to A_oC}.
\end{equation}

The \textit{no-signalling} ({\rm NS}) codes \cite{Leung2015c,Cubitt2011,Duan2016} correspond to the bipartite quantum operations which are  no-signalling from {\rm B} to {\rm A} and vice-versa. The {\rm PPT} codes \cite{Leung2015c} correspond to the deterministic super-operators which are also PPT.
%whose Choi-Jamio\l{}kowski matrix is PPT. 
The \textit{non-signalling and PPT-preserving} ({\rm NS$\cap$PPT}) codes correspond to the quantum no-signalling operations which are also PPT. Moreover, a bipartite quantum operation $\Pi_{A_iB_i\to A_oB_o}$ is called unassisted code (UA) if it can be represented as $\Pi_{A_iB_i\to A_oB_o} = \cE_{A_i\to A_o}\circ \cD_{B_i\to B_o}$. In the following, $\O$ denotes specific classes of codes, i.e., $\O \in \{\rm{UA, NS\cap PPT, PPT}\}$. 

\begin{figure}
\centering
\begin{minipage}{.38\textwidth}
\begin{tikzpicture}[scale = 0.96]
\def\xa{1};\def\xb{2};\def\xc{3};\def\xd{4};\def\xe{5};\def\xf{6};\def\xg{7};
\def\ya{0.8};\def\yc{-1.6};\def\yd{-2.4};  \def\o{0.2} \def\ye{1}; \def\yf{2};
\pgfmathsetmacro\yb{-\ya}
\draw[thick,-] (0,\ye) -- node[above,shift={(-0.2,0)}] {$R$} (\xa,\yf);
\draw[thick,->] (\xa,\yf) -- node[above] {} (\xg,\yf);
\draw[thick,->] (0,\ye) -- node[below,xshift = -0.2cm,yshift = 0.1cm] {$A_i$} (\xa,0);
\draw[thick,->] (\xb,0) -- node[above] {$A_o$} (\xc,0);
\draw[thick,->] (\xd,0) -- node[above] {$B_i$} (\xe,0);
\draw[thick,->] (\xf,0) -- node[above] {$B_o$} (\xg,0);
\draw[thick,->] (\xb,\yc) -- node[below] {$C$} (\xe,\yc);
\draw (\xa,\ya) rectangle (\xb,\yd) node[midway] {\Large $\cE$};
\draw (\xc,\ya) rectangle (\xd,\yb) node[midway] {\Large $\cN$};
\draw (\xe,\ya) rectangle (\xf,\yd) node[midway] { \Large $\cD$};
\draw[dashed,blue,fill=black!20,opacity=0.3] (\xa-\o,\ya+\o) -- (\xb+\o,\ya+\o) -- (\xb+\o,\yc+\o) -- (\xe-\o,\yc+\o)-- (\xe-\o,\ya+\o)  -- (\xf+\o,\ya+\o) -- (\xf+\o,\yd-\o) -- 
(\xa-\o,\yd-\o) -- (\xa-\o,\ya+\o);
\node[black,left,shift={(0.1,0.2)}] (d) at (\xa-\o-0.1,\yd-\o+0.1) {\Large $\Pi$};
\end{tikzpicture}
\label{fig:test2}
\end{minipage}
\hspace{1.3cm}
\begin{minipage}{.38\textwidth}
\vspace{-0.5cm}
\begin{tikzpicture}
\def\xbb{0.6};\def\xb{1.5};\def\xsh{1.3};\def\ysh{1};
\def\ya{1.3};\def\yc{3.4};\def\lo{0.2};\def\loo{0.4};
\pgfmathsetmacro\xa{-\xb};\pgfmathsetmacro\xaa{-\xbb};
\pgfmathsetmacro\xc{\xa-\xsh/2};\pgfmathsetmacro\xd{\xa+\xsh/2};
\pgfmathsetmacro\xe{\xb-\xsh/2};\pgfmathsetmacro\xf{\xb+\xsh/2};
\pgfmathsetmacro\yb{\ya+\ysh};
\pgfmathsetmacro\yab{(\ya+\yb)/2};

\draw[thick,->] (\xc-1,\yb+0.4) -- node[left,shift={(0.1,-0.2)}] 
{$A_i$} (\xc,\yab);
\draw[thick,-] (\xc-1,\yb+0.4) -- node[above,shift={(-0.2,0)}] 
{$R$} (\xc,\yab+1.8);
\draw[thick,->] (\xc,\yab+1.8) -- node[above,shift={(-0.2,0)}] 
{} (\xf+1.1,\yab+1.8);
\draw[thick,->] (\xf,\yab) -- node[above,shift={(0.2,0)}] {$B_o$} (\xf+1.1,\yab);
\draw (\xc,\yb) rectangle (\xd,\ya) node[midway] {\large $\cE$};
\draw (\xe,\yb) rectangle (\xf,\ya) node[midway] {\large $\cD$};
\draw[thick,->] (\xa,\ya) -- (\xa,0) -- node[below] {$A_o$} (\xaa,0);
\draw[thick,<-] (\xb,\ya) -- (\xb,0) -- node[below] {$B_i$} (\xbb,0);
\draw (\xaa,\ysh/2) rectangle (\xbb,-\ysh/2) node[midway] {\large $\cN$};
\draw[dashed,blue,fill=black!20,opacity=0.3] (\xc-\lo,\yb+\lo) rectangle (\xf+\lo,\ya-0.6);
%node[midway] {\Large $\Pi$}
\node[black] (dots) at (0,\yb-0.7) {\Large $\Pi$};
\draw[thick,dotted,black!70] (\xc-\loo,\yb+\loo) rectangle (\xf+\loo,-\ysh/2-\lo);
%\node[blue] (dots) at (\xc-0.02,0.87) {$\Pi$};
\node[black!50] (dots) at (\xc-0.15,-\ysh/2) {$\cI$};
\end{tikzpicture}
\end{minipage}
\caption{A deterministic super-operator $\Pi_{A_iB_i\to A_oB_o}$ is equivalently the coding scheme ($\cE$,$\cD$) with free extra resources such as entanglement. The whole operation aims to simulate a noiseless quantum channel $\cI_{A_i\to B_o}$ using a given noisy quantum channel $\cN_{A_o\to B_i}$ and the bipartite code $\Pi$.}
\label{fig:QNSC}
\end{figure}
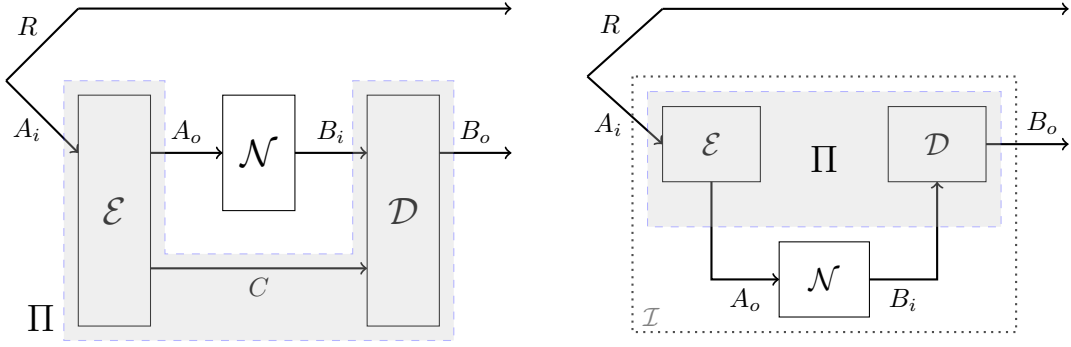

\begin{definition}
The maximum channel fidelity of $\cN$ assisted by the $\O$ code is defined by 
\begin{equation}
F_\O(\cN,k) := \sup_{\Pi} \tr (\Phi_{B_oR}\cdot \Pi_{A_iB_i\to A_oB_o}\circ \cN_{A_o\to B_i} (\Phi_{A_iR})),
\end{equation}
where $\Phi_{A_iR}$ and $\Phi_{B_oR}$ are maximally entangled states, $k = dim |A_i| = dim |B_o|$ is called code size and the supremum is taken over the $\O$ codes ($\O \in \{\rm{UA, NS\cap PPT, PPT}\}$).
\end{definition}

\begin{definition}
For a given quantum channel $\cN$ and error tolerance $\ve$, the one-shot $\ve$-error quantum capacity assisted by $\O$ codes is defined by 
\begin{equation}
\label{oneshot capacity}
Q_{\O}^{(1)}(\cN,\ve) := \log \max \left\{k \in \mathbb{N}: F_\O(\cN,k) \geq 1-\ve \right\},\end{equation}
where $\O \in \{\rm{UA, NS\cap PPT, PPT}\}$. In the following, we write $Q_{\rm UA}^{(1)}(\cN,\ve)= Q^{(1)}(\cN,\ve)$ for simplicity.

The corresponding asymptotic quantum capacity is then given by \begin{equation}
\label{asymptotic capacity}
Q_{\O}(\cN) = \lim_{\ve \rightarrow 0}\lim_{n\rightarrow \infty} \frac1n Q_{\O}^{(1)}(\cN^{\ox n},\ve).
\end{equation}
\end{definition}

The authors of Ref.~\cite{Leung2015c} showed that
the maximum channel fidelity assisted with ${\rm NS\cap PPT}$ codes is given by the following SDP:
\begin{equation}\label{PPT prime}
\begin{split}
F_{\rm NS\cap PPT}(\cN, k)= \max & \tr J_{\cN}W_{AB}\\
\text{ s.t.}& \  0 \leq W_{AB} \leq \rho_A \ox \1_B, \tr \rho_A=1,\\
& - k^{-1} \rho_A \ox \1_B \le W_{AB}^{T_{B}} \le k^{-1}\rho_A \ox \1_B,\\&
\tr_A W_{AB} = k^{-2}\1_B \ (\rm{NS}).
\end{split}\end{equation}
To obtain $F_{\rm PPT}(\cN, k)$, one only needs to remove the NS constraint.

Combining Eqs.~\eqref{oneshot capacity} and \eqref{PPT prime}, one can derive the following proposition. It is worth noting that Eq.~(\ref{tradeoff sdp primal}) is not an SDP in general, due to the non-linear term $m \rho_A$ and the condition $\tr_A W_{AB}=m^2\1_B$. But in next subsection, we will derive  several semidefinite relaxations of this optimization problem.
\begin{proposition}
\label{optimization characterization}
For any quantum channel $\cN_{A'\rightarrow B}$ with Choi-Jamio\l{}kowski matrix $J_{\cN} \in \cL(A\ox B)$ and given error tolerance $\ve$, its one-shot $\ve$-error quantum capacity assisted with PPT codes can be simplified as the following optimization problem:
\begin{equation}\label{tradeoff sdp primal}
\begin{split}
Q_{\rm PPT}^{(1)}(\cN,\ve)  =  -\log \min & \ m \\
\text{ s.t. } &
\tr J_{\cN} W_{AB} \geq 1-\varepsilon,
0 \leq W_{AB} \leq \rho_A\otimes \1_B,\\
& \tr \rho_A = 1, -m\rho_A \otimes \1_B \leq W_{AB}^{T_B} \leq m\rho_A \otimes \1_B.
\end{split}
\end{equation}
If the codes are also non-signalling, we can have the same optimization for $Q_{NS \cap PPT}^{(1)}(\cN,\ve)$ with the additional constraint  $\tr_A W_{AB}=m^2\1_B$.
\end{proposition}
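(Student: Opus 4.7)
The plan is to insert the SDP characterisation of $F_{\rm NS\cap PPT}(\cN,k)$ from Eq.~\eqref{PPT prime} (and of $F_{\rm PPT}(\cN,k)$, obtained by dropping the no-signalling equality) into the definition~\eqref{oneshot capacity} of the one-shot $\ve$-error capacity, and then to reparametrise $k\mapsto 1/m$ so as to land on the stated optimisation.

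Fixing $\O\in\{\rm PPT,\,NS\cap PPT\}$, I would first observe, via Eq.~\eqref{PPT prime}, that $F_\O(\cN,k)\ge 1-\ve$ holds if and only if there exist $W_{AB}\in\cP(A\otimes B)$ and $\rho_A\in\cS(A)$ satisfying $0\le W_{AB}\le\rho_A\otimes\1_B$, $\tr\rho_A=1$, $-k^{-1}\rho_A\otimes\1_B\le W_{AB}^{T_B}\le k^{-1}\rho_A\otimes\1_B$, the fidelity bound $\tr J_{\cN}W_{AB}\ge 1-\ve$, and (only when $\O={\rm NS}\cap{\rm PPT}$) the no-signalling equality $\tr_A W_{AB}=k^{-2}\1_B$. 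Plugging this equivalence into $Q_\O^{(1)}(\cN,\ve)=\log\max\{k:F_\O(\cN,k)\ge 1-\ve\}$ converts the one-shot capacity into a joint optimisation of $\log k$ over the triple $(k,W_{AB},\rho_A)$ subject to those constraints.

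The change of variables $m=1/k$ then turns ``maximise $k$'' into ``minimise $m$'' with objective $\log k=-\log m$, transforms the partial-transpose bounds into $-m\rho_A\otimes\1_B\le W_{AB}^{T_B}\le m\rho_A\otimes\1_B$ and the no-signalling equality into $\tr_A W_{AB}=m^{2}\1_B$, which is precisely the formulation~\eqref{tradeoff sdp primal}.

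The main technical point is that the code size $k=\dim A_i=\dim B_o$ is integer-valued, whereas in~\eqref{tradeoff sdp primal} the variable $m$ ranges over positive reals. I would address this by noting that the SDP~\eqref{PPT prime} is well-defined for every real $k\ge 1$ and that its feasible set is monotone in $k$, so the integer optimum and the continuous optimum agree up to a rounding step that is absorbed into the logarithm --- the standard convention for one-shot SDP converse characterisations. Once this is in hand, the bilinear term $m\rho_A$ and the quadratic $m^{2}\1_B$ prevent~\eqref{tradeoff sdp primal} from being an SDP, as the authors explicitly remark, and this obstruction is what motivates the semidefinite relaxations developed in the next subsection.
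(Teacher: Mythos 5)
The paper itself offers no formal proof of this proposition: it is stated with only the one-sentence justification ``Combining Eqs.~\eqref{oneshot capacity} and \eqref{PPT prime}, one can derive the following proposition.''\ Your proposal fleshes out exactly that sketch --- substitute the SDP form of $F_{\O}(\cN,k)$ into the definition of $Q_{\O}^{(1)}(\cN,\ve)$ and reparametrise $m=1/k$ --- so you are on the same route as the authors, just more explicitly. The algebra of the change of variables (turning $-k^{-1}\rho_A\ox\1_B\le W_{AB}^{T_B}\le k^{-1}\rho_A\ox\1_B$ into the $\pm m\rho_A\ox\1_B$ bounds and $k^{-2}\1_B$ into $m^2\1_B$) is exactly right, and you correctly note that the resulting problem is not an SDP because of the bilinear terms. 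Your added remark on the integer-versus-real issue is a genuine refinement the paper silently glosses over; just be aware that your feasibility-monotonicity argument cleanly covers the PPT-only case (where shrinking $k$ only relaxes the partial-transpose constraint), whereas for NS$\cap$PPT the equality constraint $\tr_A W_{AB}=k^{-2}\1_B$ is not preserved under changing $k$, so there one should instead appeal to operational monotonicity of $F_{\rm NS\cap PPT}(\cN,\cdot)$ in the code size $k$ rather than to the SDP feasible set.
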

%\begin{proof}
%This result can be easily proved by 
%\end{proof}

%

%%%%%%%%%%%%%%%%%%%%%%%%%%%%%%%%%%%%%%%%%%%%%%%%%%%%%%%%%%
\subsection{Improved SDP converse bounds for quantum communication}
 
To better evaluate the quantum communication rate with finite resources, we introduce several SDP converse bounds for quantum communication with the assistance of PPT (and NS) codes.  In Theorem \ref{improved sdp bound theorem}, we further prove that our SDP bounds are tighter than the one introduced in Ref.~\cite{Tomamichel2016}.

\vspace{0.1cm}
Specifically, the authors of Ref.~\cite{Tomamichel2016} established that $-\log f(\cN, \ve)$ is a converse bound on one-shot $\ve$-error quantum capacity, i.e., $Q^{(1)}(\cN,\ve) \leq -\log f(\cN, \ve)$ where 
\begin{equation}\label{SDP Marco}
\begin{split}
f(\cN,\eps) = \min & \tr S_A\\
\text{s.t.}& \tr W_{AB}J_{\cN}\ge 1-\eps, S_A, \Theta_{AB}\geq 0, \tr \rho_A=1,\\
& 0\le W_{AB}\leq\rho_A \ox \1_B, S_A\otimes \1_B\geq W_{AB}+\Theta_{AB}^{T_B}.
\end{split} 
\end{equation}

Here, we introduce a hierarchy of SDP converse bounds on the one-shot $\ve$-error capacity based on the optimization problem in Eq.~\eqref{tradeoff sdp primal}. If we relax the term $m\rho_A$ to a single variable $S_A$, we will obtain $g(\cN,\ve)$, where
\begin{equation}
\label{PPT g}
\begin{split}
g(\cN,\ve) := \min & \tr S_A\\
\text{ s.t. }& \tr J_{\cN}W_{AB}\ge 1-\ve,0 \leq W_{AB} \leq \rho_A \ox \1_B,\\
& \tr \rho_A=1,-{S_A \ox \1_B} \le W_{AB}^{T_{B}} \le  {S_A \ox \1_B}.
\end{split}
\end{equation}

In particular, for the NS condition $\tr_A W_{AB}=m^2\1_B$, there are two different ways to get relaxations. The first one is to substitute it with $\tr_A W_{AB} =t \1_B$ and obtain SDP~$\widetilde g(\cN,\ve)$). The second one is to introduce a prior constant $\widehat m$ satisfying the inequality \begin{align}
Q_{\rm NS\cap PPT}^{(1)}(\cN,\ve) \leq -\log \widehat m\end{align}
and then obtain SDP~$\widehat g(\cN,\ve)$. Note that the second method can provide a tighter bound, but it requires one more step of calculation since we need to get the prior constant~$\widehat m$. Successively refining the value of $\widehat m$ will result in a tighter bound.
\begin{align}
\begin{split}\label{PPT NS g}
\widetilde g(\cN,\ve) := \min & \tr S_A \\
\text{ s.t. }& \tr J_{\cN}W_{AB}\geq1-\ve, 0 \leq W_{AB} \leq \rho_A \ox \1_B,\\
&\tr \rho_A=1, -{S_A \ox \1_B} \le W_{AB}^{T_{B}} \le  {S_A \ox \1_B},\\
& \tr_A W_{AB} =t \1_B.
\end{split}\\ 
\begin{split}\label{PPT NS hat g}
\widehat g(\cN,\ve) := \min & \tr S_A \\
\text{ s.t. }& \tr J_{\cN}W_{AB}\geq1-\ve, 0 \leq W_{AB} \leq \rho_A \ox \1_B,\\
&\tr \rho_A=1, -{S_A \ox \1_B} \le W_{AB}^{T_{B}} \le  {S_A \ox \1_B},\\
& \tr_A W_{AB} =t \1_B, t \geq \widehat m^2.
\end{split}
\end{align}

\begin{theorem}
\label{improved sdp bound theorem}
For any quantum channel $\cN$ and error tolerance $\ve$, the inequality chain holds
\begin{equation}
Q^{(1)}(\cN,\ve)\le Q_{\rm NS \cap PPT}^{(1)}(\cN,\ve) \leq -\log \widehat g(\cN,\ve) \le -\log \widetilde g(\cN,\ve) \le -\log g(\cN,\ve)\le -\log f(\cN,\ve).
\end{equation}
\end{theorem}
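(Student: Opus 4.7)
The plan is to establish each of the five inequalities in the chain separately; every one reduces to either a set inclusion or a short feasibility mapping between SDPs. I proceed from left to right. The leftmost inequality $Q^{(1)}(\cN,\ve)\le Q_{\rm NS\cap PPT}^{(1)}(\cN,\ve)$ is immediate once I observe that any unassisted code $\Pi=\cE_{A_i\to A_o}\circ\cD_{B_i\to B_o}$ is a tensor product of local maps, hence automatically both PPT and no-signalling; so $\mathrm{UA}\subset\mathrm{NS}\cap\mathrm{PPT}$ and the supremum defining $F_\O(\cN,k)$ can only grow when enlarging the code class.

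For $Q_{\rm NS\cap PPT}^{(1)}(\cN,\ve)\le -\log\widehat g(\cN,\ve)$, I would take an optimizer $(W_{AB},\rho_A,m)$ of the SDP in Proposition~\ref{optimization characterization} (including the NS equality $\tr_A W_{AB}=m^2\1_B$) and produce the candidate $(W_{AB},\rho_A,\,S_A:=m\rho_A,\,t:=m^2)$ for $\widehat g$. The two-sided transpose bound of the proposition becomes $-S_A\otimes\1_B\le W_{AB}^{T_B}\le S_A\otimes\1_B$, the NS constraint becomes $\tr_A W_{AB}=t\1_B$, and the priming hypothesis $\widehat m\le m$ forces $t=m^2\ge\widehat m^2$. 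Feasibility in $\widehat g$ holds with objective $\tr S_A=m$, giving $\widehat g(\cN,\ve)\le m$. The two middle inequalities $-\log\widehat g\le -\log\widetilde g\le -\log g$ are pure monotonicity statements: $\widehat g$ adds only the constraint $t\ge\widehat m^2$ to $\widetilde g$, and $\widetilde g$ adds only the equality $\tr_A W_{AB}=t\1_B$ to $g$, so shrinking the feasible set can only raise the minimum.

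The last and most substantive step is $g(\cN,\ve)\ge f(\cN,\ve)$. Given a feasible triple $(W_{AB},\rho_A,S_A)$ for $g$, I set $\Theta_{AB}:=S_A\otimes\1_B-W_{AB}^{T_B}$. The upper transpose bound in $g$ guarantees $\Theta_{AB}\ge 0$, and because $(S_A\otimes\1_B)^{T_B}=S_A\otimes\1_B$ one has $W_{AB}+\Theta_{AB}^{T_B}=S_A\otimes\1_B$, which verifies the defining inequality of $f$ with equality. Positivity $S_A\ge 0$ follows by taking $\tr_B$ of the upper transpose bound, and the objective $\tr S_A$ is untouched, so $g\ge f$. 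I expect this step to be the main obstacle, both because of the ingenuity in choosing the right PSD witness $\Theta_{AB}$ and because it is the only step that exploits SDP structure rather than trivial constraint manipulation; notably, only the upper transpose bound in $g$ is used, which already hints that $-\log g$ should generally be strictly tighter than $-\log f$.
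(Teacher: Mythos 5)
Your proof is correct and follows the paper's own argument essentially step by step: the trivial inclusion $\mathrm{UA}\subset\mathrm{NS}\cap\mathrm{PPT}$, the feasibility map $(W,\rho,m)\mapsto(W,\rho,m\rho,m^2)$ into $\widehat g$, the two monotonicity-under-added-constraints steps, and the witness $\Theta_{AB}=S_A\ox\1_B-W_{AB}^{T_B}$ for $f\le g$. The only differences are that you fill in details the paper leaves implicit (verifying $t\ge\widehat m^2$ from the priming hypothesis, and $S_A\ge 0$ from the two-sided transpose bound), which is welcome but not a change of route.
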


\begin{proof}
The first inequality is trivial. The third and fourth inequalities are easy to obtain since the minimization over a smaller feasible set gives a larger optimal value here.

For the second inequality,  suppose the optimal solution of (\ref{tradeoff sdp primal}) for~$Q_{\rm NS \cap PPT}^{(1)}(\cN,\ve)$,  is taken at $\{W_{AB}, \rho_A, m\}$. Let $S_A = m\rho_A,\ t = m^2$. Then we can verify that $\{W_{AB},\rho_A,S_A,t\}$ is a feasible solution to the SDP (\ref{PPT NS hat g}) of $\widehat g(\cN,\ve)$.  So $\widehat g(\cN,\ve) \leq \tr S_A = m$, which implies that $Q_{\rm NS \cap PPT}^{(1)}(\cN,\ve) = -\log m \le -\log \widehat g(\cN,\ve)$.

For the last inequality, we only need to show that $f(\cN,\eps)\le g(\cN,\eps)$.
Suppose the optimal solution of $g(\cN,\eps)$ is taken at $\{\rho_A, S_A, W_{AB}\}$. Let us choose 
$\Theta_{AB}=S_A\ox \1_B -W_{AB}^{T_B}.$
Since $S_A\ox \1_B \ge W_{AB}^{T_B}$, it is clear that $\Theta_{AB}\ge 0$ and $S_A\ox \1_B=W_{AB}+\Theta_{AB}^{T_B}.$ 
Thus, $\{S_A, \rho_A,  W_{AB}, \Theta_{AB} \}$ is a feasible solution to the SDP (\ref{SDP Marco}) of $ f(\cN, \ve)$ which implies $f(\cN, \ve) \le \tr S_A = g(\cN,\ve)$.
\end{proof}

%%%%%%%%%%%%%%%%%%%%%%%%%%%%%%%%%%%%%%%%%%%%%%%%%%%%%%%%%%

%%%%%%%%%%%%%%%%%%%%%%%%%%%%%%%%%%%%%%%%%%%%%%%%%%%%%%%%%%
\subsection{Examples: amplitude damping channel and depolarizing channel}
\label{channel example}
In this subsection, we focus on quantum coding with amplitude damping channels and depolarizing channels.  In Fig. \ref{AD}, we show that for the amplitude damping channel $\cN_{AD}$, our converse bound $-\log \widetilde g(\cN,\ve)$ and $-\log g(\cN,\ve)$ are both tighter than $-\log f(\cN,\ve)$. For the depolarizing channel~$\cN_D$, exploiting its symmetry, we further simplify our SDP converse bounds to linear programs.

\textbf{Example 1}:
For the amplitude damping channel  $\cN_{AD}=\sum_{i=0}^1 E_i\cdot E_i^\dag$ 
with $E_0=\ketbra{0}{0}+\sqrt{1-r}\ketbra{1}{1}$, $E_1=\sqrt{r}\ketbra{0}{1}$ 
($0\leq r\leq 1$), the differences among  $-\log f(\cN_{AD}^{\ox 2}, 0.01)$, $-\log g(\cN_{AD}^{\ox 2}, 0.01)$ and $-\log \widetilde g(\cN_{AD}^{\ox 2}, 0.01)$, are presented   in Fig.~\ref{AD}. When $r \in (0.082,0.094)$, 
$-\log \widetilde g(\cN_{AD}^{\ox 2}, 0.01) \leq -\log  g(\cN_{AD}^{\ox 2}, 0.01)< 1 < -\log f(\cN_{AD}^{\ox 2}, 0.01)$. It shows that we cannot transmit a single qubit within error tolerance $\ve = 0.01$ via $2$ copies of  amplitude damping channel where parameter $r \in (0.082,0.094)$. However, this result cannot be obtained via the converse bound $-\log f(\cN_{AD}^{\ox 2}, 0.01)$.

\begin{figure}[H]
\centering
\includegraphics[scale = 0.63]{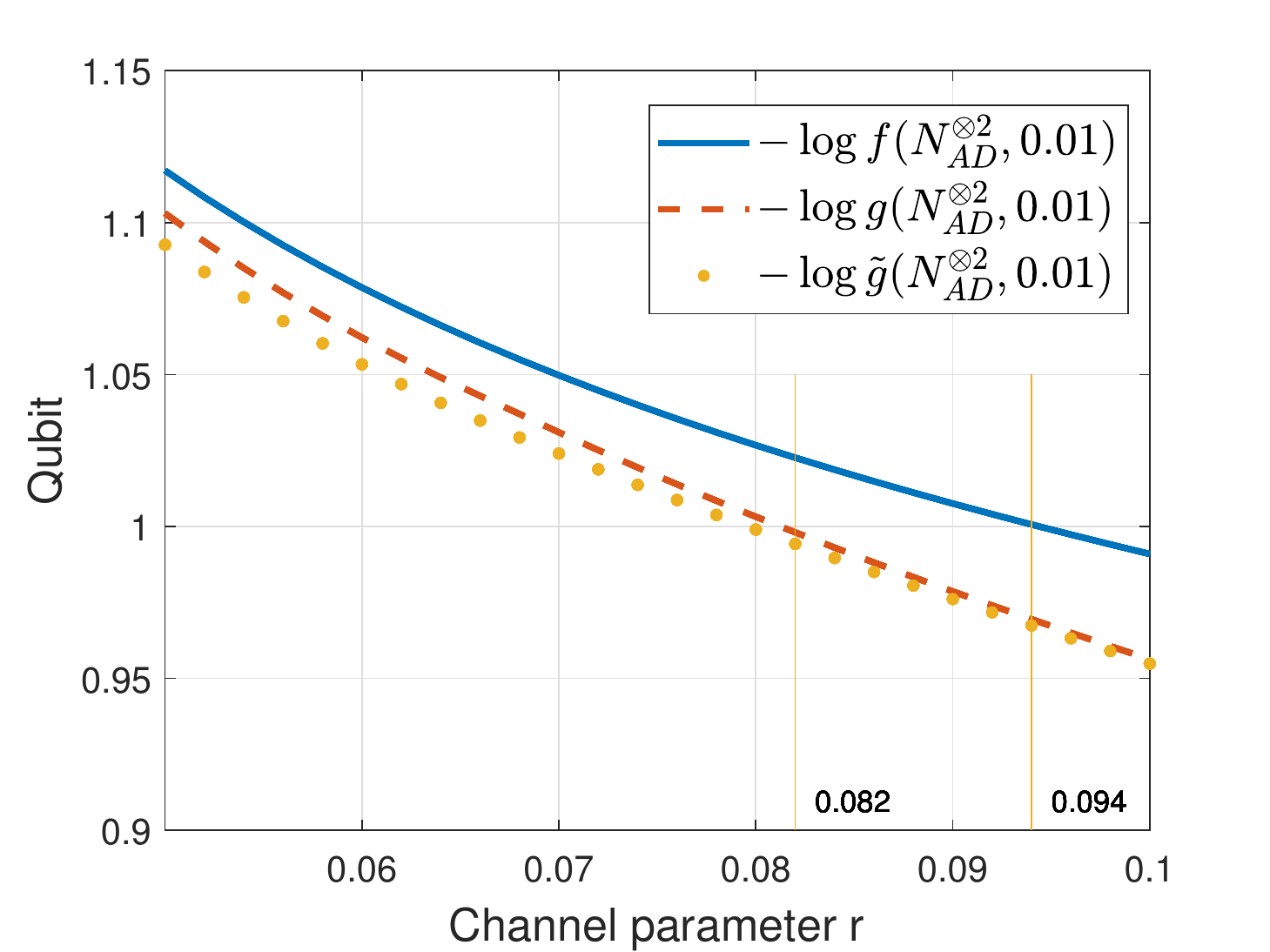}
\caption{This figure demonstrates the differences among the SDP converse bounds (i) $-\log f(\cN_{AD}^{\ox 2}, 0.01)$ (blue solid), (ii) $-\log g(\cN_{AD}^{\ox 2}, 0.01)$ (red dashed), (iii) $-\log \widetilde g(\cN_{AD}^{\ox 2}, 0.01)$ (yellow dotted), where the channel parameter $r$ ranges from $0.05$ to $0.1$.}
\label{AD}
\end{figure}

\textbf{Example 2}:
For the qubit depolarizing channel $\cN_{D}(\rho) = (1-p)\rho + \frac{p}{3}(X\rho X + Y\rho Y + Z\rho Z)$, where $X,Y,Z$ are Pauli matrices, 
the Choi matrix of $\cN_D$ is $J_{\cN} = d((1-p)\Phi +  \frac{p}{d^2-1}\Phi^\perp)$, where $d=2$, $\Phi = \frac{1}{d} \sum_{i,j=0}^{d-1} \ket{ii}\bra{jj}$ and $\Phi^\perp = \1_{AB} -\Phi$. For the $n$-fold tensor product depolarizing channel, its Choi matrix is $J_{\cN}^{\ox n} = d^n\sum_{i=0}^n f_i P_i^n(\Phi,\Phi^\perp)$, where $f_i = (1-p)^i (\frac{p}{d^2-1})^{n-i}$ and  $P_i^n(\Phi,\Phi^\perp)$ represent the sum of those $n$-fold tensor product terms with exactly $i$ copies of $\Phi$. For example,
\begin{align}
P_1^3(\Phi,\Phi^\perp) = \Phi^\perp \otimes \Phi^\perp \otimes \Phi + \Phi^\perp  \otimes \Phi \otimes \Phi^\perp + \Phi \otimes \Phi^\perp \otimes \Phi^\perp.\end{align}

Suppose $\{W_{AB},\rho_A,S_A\}$ is the optimal solution to the SDP (\ref{PPT g}) for the channel $\cN_D^{\ox n}$,
then for any local unitary $U = \bigotimes_{i=1}^n U_A^i\ox \overline{U}_B^i$, $U_A = \bigotimes_{i=1}^n U_A^i$, we know that $\{U W U^\dagger, U_A \rho_A U_A^\dagger, U_A S_A U_A^\dagger\}$ is also optimal. 
Convex combinations of optimal solutions remain optimal. Without loss of generality, we can take the optimal solution to be invariant under any local unitary $U$ and $U_A$, respectively. 
Again, since $J_{\cN}^{\ox n}$ is invariant under the symmetric group, acting by permuting the tensor factors, we can finally take the optimal solution as $W = \sum_{i=0}^n w_i P_i^n(\Phi,\Phi^\perp)$, $\rho_A = \1_A/d^n$, $S_A = s \1_A$.

%By direct calculation, we have \[P_i^n(\Phi, \Phi^\perp) \cdot P_j^n(\Phi, \Phi^\perp) = \delta_{ij} P_i^n(\Phi, \Phi^\perp).\]
Note that $P_i^n(\Phi,\Phi^\perp)$ are orthogonal projections. Thus without considering degeneracy, operator $W$ has eigenvalues $\{w_i\}_{i=0}^n$. Next, we need to know the  eigenvalues of $W^{T_B}$. Decomposing operators $\Phi^{T_B}$ and ${\Phi^{\perp}}^{T_B}$ into orthogonal projections, i.e.,
\begin{align}
\Phi^{T_B} = \frac{1}{d}(P_+ - P_-), \quad {\Phi^{\perp}}^{T_B} = (1-\frac{1}{d})P_+ + (1+\frac{1}{d})P_-
\end{align} where $P_+$ and $P_-$ are symmetric and anti-symmetric projections respectively and collecting the terms with respect to $P_k^n(P_+,P_-)$, we have
\begin{align}
&W^{T_B} = \sum_{i=0}^n w_i P_i^n(\Phi^{T_B},{\Phi^\perp}^{T_B}) =\sum_{k=0}^n (\sum_{i=0}^n x_{i,k} w_i)  P_k^n(P_+,P_-), \quad \text{where}\\
&x_{i,k} = \frac{1}{d^n}\sum_{m=\max\{0,i+k-n\}}^{\min \{i,k\}} \binom{k}{m}\binom{n-k}{i-m} (-1)^{i-m} (d-1)^{k-m}(d+1)^{n-k+m-i}.\end{align}

Since $P_k^n(P_+,P_-)$ are also orthogonal projections, $W^{T_B}$ has eigenvalues $\{t_k\}_{k=0}^n$ (without considering degeneracy), where $t_k = \sum_{i=0}^n x_{i,k} w_i$.
As for the constraint $\tr J_{\cN}^{\ox n} W_{AB}\ge 1-\varepsilon$, we have
\begin{align}
\tr  J_{\cN}^{\ox n} W  = d^n \tr \sum_{i=0}^n f_i w_i  P_i^n(\Phi, \Phi^\perp) =  d^n \sum_{i=0}^n \binom{n}{i} (1-p)^i p^{n-i} w_i  \geq 1-\varepsilon.\end{align}

Finally, substitute $\eta = sd^n$ and $m_i = w_i d^n$.  We obtain the linear program 
\begin{equation}\begin{split}\label{ED PPT LP}
g(\cN_{D}^{\ox n},\ve) = \min & \ \eta \\
\text{s.t.} & \sum_{i=0}^n \binom{n}{i} (1-p)^i p^{n-i} m_i  \geq 1-\varepsilon,\\
&0 \leq m_i \leq 1, \ i = 0,1,\cdots, n,\\
-\eta & \leq \sum_{i=0}^n x_{i,k} m_i \leq \eta, \ k = 0,1,\cdots, n.
\end{split}\end{equation}
Following a similar procedure, we have 

\begin{minipage}[t]{.4\linewidth}
\begin{equation*}
\begin{split}
f  (\cN_{D}^{\ox n},\ve)  &  = \min \ \eta\\ 
\text{ s.t. }  & \sum_{i=0}^n \binom{n}{i} (1-p)^i p^{n-i} m_i  \geq 1-\varepsilon,\\
&\ m_i + s_i \leq \eta, \ i = 0,1,\cdots, n,\\
& \eta \geq 0,\ 0 \leq m_i \leq 1, \ i = 0,1,\cdots, n\\
& \sum_{i=0}^n x_{i,k} s_i \geq 0, \ k = 0,1,\cdots, n.
\end{split}\end{equation*}
\end{minipage}
\begin{minipage}[t]{.6\linewidth}
\begin{equation*}
\begin{split}
\widehat g  (\cN_{D}^{\ox n}, \ve) & = \min \ \eta\\ 
\text{ s.t. }&\ \sum_{i=0}^n \binom{n}{i} (1-p)^i p^{n-i} m_i  \geq 1-\varepsilon,\\
& 0 \leq m_i \leq 1, \ i = 0,1,\cdots, n,\\
& -\eta \leq \sum_{i=0}^n x_{i,k} m_i \leq \eta, \ k = 0,1,\cdots, n,\\
& \frac{1}{d^{2n}}\sum_{i=0}^n \binom{n}{i} (d^2-1)^{n-i} m_i \geq \widehat m^2.
\end{split}\end{equation*}
\end{minipage}

\vspace{0.5cm}
Since $-\log \widehat g (\cN_{D}^{\ox n}, \ve)$ is a converse bound for any $\widehat m \leq 2^{-Q^{(1)}_{\text{PPT} \cap \text{NS}}(\cN_{D}^{\ox n}, \ve)}$, we can successively refine the value of $\widehat m$ and obtain a tighter result. Let us denote $\widehat m_i$ and $\widehat g_i (\cN_{D}^{\ox n}, \ve)$ as the values of $\widehat m$ and $\widehat g (\cN_{D}^{\ox n}, \ve)$ in the $i$-th iteration, respectively. First, we take the initial value of $\widehat m_{1} = g (\cN_{D}^{\ox n}, \ve)$ and get the result $\widehat g_1 (\cN_{D}^{\ox n}, \ve)$. Then we can set $\widehat m_{i+1} = \widehat g_i (\cN_{D}^{\ox n}, \ve)$ and get the result $\widehat g_{i+1} (\cN_{D}^{\ox n}, \ve)$. In Fig. \ref{g NS iso}, we show that after five iterations, we can get a converse bound $-\log \widehat g_5 (\cN_{D}^{\ox n}, \ve)$ which is strictly tighter than  $-\log f (\cN_{D}^{\ox n}, \ve)$. Especially, when $n=17$, $-\log \widehat g_5(\cN_{D}^{\ox n}, \ve) < 1 < -\log f(\cN_{D}^{\ox n}, \ve)$. It shows that we cannot transmit a single qubit within error tolerance $\ve = 0.004$ via $17$ copies of depolarizing channel where parameter $p=0.2$. However, this result cannot be obtained via the converse bound $-\log f(\cN_{D}^{\ox n}, \ve)$.

\begin{figure}[H]
\centering
\includegraphics[scale = 0.6]{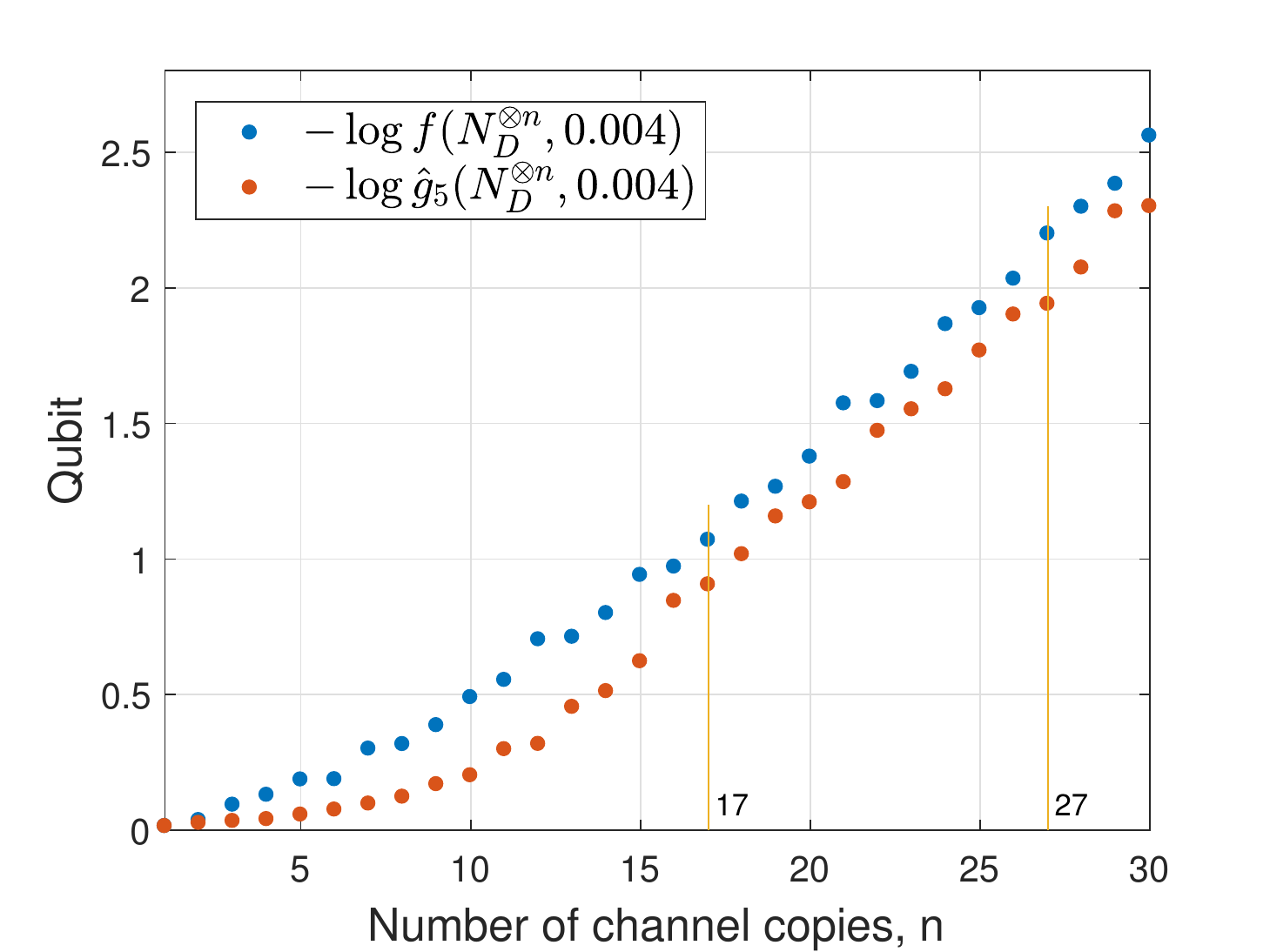}
\caption{This figure demonstrates the differences between the SDP converse bounds $-\log f (\cN_{D}^{\ox n}, 0.004)$ (blue dots) and $-\log \widehat g_5 (\cN_{D}^{\ox n}, 0.004)$ (red dots), where the channel parameter $p=0.2$ and the number of channel uses ranges from $1$ to $30$.}
\label{g NS iso}
\end{figure}

%%%%%%%%%%%%%%%%%%%%%%%%%%%%%%%%%%%%%%%%%%%%%%%%%%%%%%%%%

%%%%%%%%%%%%%%%%%%%%%%%%%%%%%%%%%%%%%%%%%%%%%%%%%%%%%%%%%
\section{Strong converse bound for quantum communication}
\label{Strong converse bound for quantum communication}

In this section, we establish an SDP strong converse bound $Q_\G$ (or $R_{\max}$) to evaluate the quantum capacity of a general quantum channel. We summarize our strong converse bound with other well-known bounds in Table \ref{compare table}. 
%Among those efficiently computable strong converse bounds for general channels, we prove that $Q_\G(\cN)$ is better than the partial transpose bound~\cite{Holevo2001} and the bound based on the max-relative entropy of entanglement \cite{Christandl2016}. We also remark that it is also strictly tighter than the entanglement-assisted quantum capacity in the case of entanglement-breaking channels with non-zero classical capacity. Moreover, we find that our bound is weaker than the Rains information~\cite{Tomamichel2015a} in general.

\subsection{An SDP strong converse bound on quantum capacity}

\begin{proposition}        
For any quantum channel $\cN$ and error tolerance $\ve$, 
\begin{align}
\label{oneshot q gamma bound}
Q_{\rm PPT}^{(1)}(\cN,\ve) \leq Q_\G(\cN) - \log (1-\ve),
\end{align}
where $Q_\G(\cN):=\log\G(\cN)$ and  
\begin{align}
\label{Gamma primal}
{\rm (Primal)} \quad & \G(\cN)= \max \left\{\tr  J_{\cN}R_{AB}: R_{AB},\rho_A\ge0, \tr{\rho_A}=1,
-\rho_A \ox \1_B \le R_{AB}^{T_{B}} \le \rho_A \ox \1_B\right\},\\{\rm (Dual)}\ \quad & \G(\cN)= \min \left\{\mu: Y_{AB},V_{AB}\ge0, (V_{AB}-Y_{AB})^{T_{B}} \ge J_{\cN},\tr_B(V_{AB}+Y_{AB})\le \mu \1_A\right\}. \label{dual WN}
\end{align}
\end{proposition}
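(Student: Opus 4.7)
The plan is to address the proposition in two logically separate pieces: first, establish that the primal and dual expressions in Eqs.~\eqref{Gamma primal} and \eqref{dual WN} coincide; second, deduce the one-shot converse bound~\eqref{oneshot q gamma bound} from the primal characterization of $\G(\cN)$ combined with the optimization in Proposition~\ref{optimization characterization}.

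For the SDP duality part, I would introduce Lagrange multipliers $V_{AB}, Y_{AB} \ge 0$ for the two PPT-norm inequalities $\rho_A \ox \1_B \pm R_{AB}^{T_B} \ge 0$ and a scalar $\mu$ for the trace normalization $\tr \rho_A = 1$. Rearranging the Lagrangian isolates a term linear in $R_{AB}$ with coefficient $J_\cN - (V_{AB} - Y_{AB})^{T_B}$ and one linear in $\rho_A$ with coefficient $\tr_B(V_{AB} + Y_{AB}) - \mu \1_A$; boundedness over $R_{AB}, \rho_A \ge 0$ forces precisely the dual constraints of Eq.~\eqref{dual WN}, and the residual Lagrangian reduces to $\mu$. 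Strong duality then follows from Slater's condition applied to the primal: the choice $\rho_A = \1_A/d_A$ with $R_{AB} = \1_{AB}/(2 d_A d_B)$ satisfies all inequalities strictly, so the optimal values agree and the dual minimum is attained.

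For the bound~\eqref{oneshot q gamma bound}, the key observation is a rescaling that turns feasible triples of the $Q^{(1)}_{\rm PPT}$-SDP in Eq.~\eqref{tradeoff sdp primal} into feasible pairs of the $\G$-SDP in Eq.~\eqref{Gamma primal}. Given any feasible $(W_{AB}, \rho_A, m)$ with $m > 0$, define $R_{AB} \vcentcolon= W_{AB}/m$; dividing the constraint $-m\,\rho_A \ox \1_B \le W_{AB}^{T_B} \le m\,\rho_A \ox \1_B$ through by $m$ shows that $(R_{AB}, \rho_A)$ is primal-feasible for $\G(\cN)$. Consequently $\G(\cN) \ge \tr J_\cN R_{AB} = (\tr J_\cN W_{AB})/m \ge (1-\ve)/m$. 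Taking the infimum over feasible $m$ gives $\G(\cN) \ge (1-\ve)\, 2^{Q^{(1)}_{\rm PPT}(\cN,\ve)}$, which rearranges to the claimed inequality after applying $\log$.

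The main obstacle is actually quite modest: neither step is technically hard once one notices the rescaling $R = W/m$, which is what decouples the nonlinear product $m \rho_A$ into the single PPT-norm constraint defining $\G(\cN)$. The primary care needed is in the duality bookkeeping, in particular checking that the two $\pm$ constraints combine correctly into $(V-Y)^{T_B}$ (rather than $(V+Y)^{T_B}$) on the channel side while adding into $V+Y$ on the reduced-state side, and in verifying a Slater-type condition so that Eq.~\eqref{dual WN} is genuinely an equality rather than only a weak-duality upper bound.
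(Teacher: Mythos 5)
Your proposal follows essentially the same approach as the paper: the rescaling $R_{AB}=W_{AB}/m$ to map a feasible point of Eq.~\eqref{tradeoff sdp primal} into Eq.~\eqref{Gamma primal}, and the Lagrange-multiplier derivation of the dual with $V,Y\ge 0$ attached to $\rho_A\ox\1_B\mp R_{AB}^{T_B}\ge 0$. The explicit Slater-point check (e.g.\ $\rho_A=\1_A/d_A$, $R_{AB}=\1_{AB}/(2d_Ad_B)$) is a welcome extra detail the paper leaves implicit, but it does not alter the structure of the argument.
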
        
\begin{proof}
Suppose the optimal solution in the optimization (\ref{tradeoff sdp primal}) of $Q_{PPT}^{(1)}(\cN,\ve)$ is taken at $\{W_{AB},\rho_A,m\}$, then $Q_{PPT}^{(1)}(\cN,\ve) = -\log m$. Denote $R_{AB} = \frac1m W_{AB}$ and we can verify that $\{R_{AB},\rho_A\}$ is a feasible solution to the SDP~(\ref{Gamma primal}). Thus 
\[Q_\G(\cN) \geq \log \tr J_{\cN} R_{AB} = \log \frac1m\tr J_{\cN} W_{AB} \geq \log \frac1m (1-\ve) = Q_{PPT}^{(1)}(\cN,\ve) + \log (1-\ve).\]
This concludes the proof. 

The dual SDP can be derived via the Lagrange multiplier method. The main step is to associate a positive-semidefinite Lagrange multiplier for each inequality constraint. To be specific, we introduce $V_{AB},Y_{AB}\ge 0$ and a real multiplier $\mu$, and obtain the following Lagrangian:
\begin{equation}
\begin{split}
	&\tr J_\cN R_{AB}\\
	+&\tr (\rho_A\ox\1_B-R_{AB}^{T_B})V_{AB}\\
	+&\tr (\rho_A\ox\1_B+R_{AB}^{T_B})Y_{AB}\\
	+&\mu (1-\tr \rho_A)\\
	=&\mu+\tr R_{AB}(J_\cN-V_{AB}^{T_B}+Y_{AB}^{T_B})\\
	+&\tr \rho_{A}(\tr_B V_{AB}+\tr_B Y_{AB}-\mu \1_A).
\end{split}
\end{equation}
Hence, the dual SDP is to minimize $\mu$ subject to
\begin{align}
V_{AB},Y_{AB}&\ge 0,\\
J_\cN &\le V_{AB}^{T_B}-Y_{AB}^{T_B},\\
\tr_B (V_{AB}+Y_{AB})&\le\mu \1_A.
\end{align}
\end{proof}

\begin{proposition}
\label{Q gamma add}
For any quantum channel $\cN_1$ and $\cN_2$, we have
\begin{align}
Q_\G(\cN_1\ox \cN_2) = Q_\G(\cN_1) + Q_\G(\cN_2).
\end{align}
\end{proposition}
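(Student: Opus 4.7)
My plan is to prove the equivalent multiplicative statement $\Gamma(\cN_1 \ox \cN_2) = \Gamma(\cN_1)\Gamma(\cN_2)$ by showing the two inequalities separately, using the primal SDP for the lower bound and the dual SDP for the upper bound. This is the standard strategy for SDP additivity: the primal constructs a feasible tensor-product solution to show $\Gamma(\cN_1\ox\cN_2) \geq \Gamma(\cN_1)\Gamma(\cN_2)$, and the dual constructs a feasible combination to give $\Gamma(\cN_1\ox\cN_2) \leq \Gamma(\cN_1)\Gamma(\cN_2)$.

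For the primal direction, let $(R_1,\rho_1)$ and $(R_2,\rho_2)$ be optimal for $\cN_1$ and $\cN_2$. I will try $R_{AB} = R_1 \ox R_2$ and $\rho_A = \rho_1 \ox \rho_2$. The value is immediately $\tr(J_{\cN_1}\ox J_{\cN_2})(R_1\ox R_2) = \Gamma(\cN_1)\Gamma(\cN_2)$, and positivity and trace normalization are clear. The nontrivial constraint is $\pm(R_1\ox R_2)^{T_B} \le \rho_1\ox\rho_2\ox\1$. The trick I would use is to write $V_i^{\pm} := \rho_i\ox\1 \pm R_i^{T_{B_i}} \ge 0$, so that $R_i^{T_{B_i}} = \tfrac12(V_i^+ - V_i^-)$ and $\rho_i\ox\1 = \tfrac12(V_i^+ + V_i^-)$. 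Expanding, one finds $\rho_1\ox\rho_2\ox\1 \mp R_1^{T_{B_1}}\ox R_2^{T_{B_2}}$ equals $\tfrac12(V_1^{\pm}\ox V_2^{\mp} + V_1^{\mp}\ox V_2^{\pm})$ or $\tfrac12(V_1^+\ox V_2^+ + V_1^-\ox V_2^-)$, both manifestly PSD. This establishes $\Gamma(\cN_1\ox\cN_2) \ge \Gamma(\cN_1)\Gamma(\cN_2)$.

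For the dual direction, let $(V_1,Y_1,\mu_1)$ and $(V_2,Y_2,\mu_2)$ be optimal dual solutions. The guess I would make—motivated by wanting $V - Y$ to factorize as $(V_1-Y_1)\ox(V_2-Y_2)$—is
\begin{equation}
V_{AB} = V_1\ox V_2 + Y_1\ox Y_2, \qquad Y_{AB} = V_1\ox Y_2 + Y_1\ox V_2.
\end{equation}
Both are sums of tensor products of PSD operators, hence PSD. Then $(V-Y)^{T_B} = (V_1-Y_1)^{T_{B_1}}\ox(V_2-Y_2)^{T_{B_2}} \ge J_{\cN_1}\ox J_{\cN_2} = J_{\cN_1\ox\cN_2}$, where I use that $A \ge B \ge 0$ and $C \ge D \ge 0$ implies $A\ox C \ge B\ox D$ (via the identity $A\ox C - B\ox D = A\ox(C-D) + (A-B)\ox D$), together with the fact that the Choi matrices are PSD. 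Finally $V+Y = (V_1+Y_1)\ox(V_2+Y_2)$, so $\tr_B(V+Y) = \tr_{B_1}(V_1+Y_1) \ox \tr_{B_2}(V_2+Y_2) \le \mu_1\mu_2\, \1_{A_1A_2}$. Thus $\mu = \mu_1\mu_2$ is dual-feasible for $\cN_1\ox\cN_2$, yielding $\Gamma(\cN_1\ox\cN_2) \le \Gamma(\cN_1)\Gamma(\cN_2)$.

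The main obstacle is really just finding the right ansatz in each direction; once chosen, verification is algebra. The primal direction looks delicate because the sandwiched inequality $\pm R^{T_B} \le \rho\ox\1$ is not obviously preserved under tensoring, but the $V^{\pm}$ decomposition linearizes it. The dual direction is somewhat less obvious to guess, but is forced once one insists on $(V-Y)$ factorizing as a tensor product of the individual $(V_i - Y_i)$ factors so that the constraint $(V-Y)^{T_B}\ge J_\cN$ can be verified via monotonicity of the tensor product on the PSD cone. Taking logarithms of $\Gamma(\cN_1\ox\cN_2) = \Gamma(\cN_1)\Gamma(\cN_2)$ then yields the stated additivity of $Q_\Gamma$.
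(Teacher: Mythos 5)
Your proof takes essentially the same approach as the paper: the same primal tensor-product ansatz $\{R_1\ox R_2,\rho_1\ox\rho_2\}$ and the same dual ansatz $V=V_1\ox V_2+Y_1\ox Y_2$, $Y=V_1\ox Y_2+Y_1\ox V_2$. The only difference is that you spell out the verification of the primal PPT-type constraint via the $V_i^{\pm}$ decomposition, which the paper leaves implicit.
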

\begin{proof}    
We only need to show that $\G(\cN_1\ox \cN_2) = \G(\cN_1) \G(\cN_2)$. For the primal problem (\ref{Gamma primal}), suppose the optimal solutions of the SDP~(\ref{Gamma primal}) for $\cN_1$ and $\cN_2$ are 
$\{R_1,\rho_1\}$ and $\{R_2,\rho_2\}$, respectively. Then we can verify that 
$\{R_1\ox R_2, \rho_1 \ox \rho_2\}$ is a feasible solution of $\G(\cN_1\ox \cN_2)$. Thus $\G(\cN_1\ox \cN_2) \geq \tr (J_{\cN_1} \ox J_{\cN_2})(R_1\ox R_2) = \G(\cN_1) \G(\cN_2)$.

For the dual problem (\ref{dual WN}), suppose the optimal solutions of the SDP (\ref{dual WN}) for  $\cN_1$ and $\cN_2$ are $\{V_1,Y_1,\mu_1\}$ and $\{V_2,Y_2,\mu_2\}$. Denote $V = V_1\ox V_2 + Y_1 \ox Y_2$ and $Y = V_1 \ox Y_2 + Y_1 \ox V_2$. It can be easily verified that $\{V,Y,\mu_1 \mu_2\}$ is a feasible solution of $\G(\cN_1\ox \cN_2)$. Thus $\G(\cN_1\ox \cN_2) \leq \G(\cN_1)\G(\cN_2)$.
\end{proof}    

\begin{theorem}
For any quantum channel $\cN$, we have
\begin{align}
Q(\cN) \leq Q_{\rm PPT}(\cN) \leq Q_\G(\cN).
\end{align}
Moreover,  $Q_\G(\cN)$ is a strong converse bound. That is, if the rate exceeds $Q_\G(\cN)$, the error probability will approach to one exponentially fast as the number of channel uses increase.
\end{theorem}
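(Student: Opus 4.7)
The plan is to deduce everything from the two ingredients already established: the one-shot inequality $Q_{\rm PPT}^{(1)}(\cN,\ve) \le Q_\G(\cN) - \log(1-\ve)$ and the multiplicativity $\G(\cN_1\ox\cN_2)=\G(\cN_1)\G(\cN_2)$ (equivalently, additivity of $Q_\G$).

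First, the inequality $Q(\cN) \le Q_{\rm PPT}(\cN)$ is immediate, since any unassisted code $\cE\circ\cD$ is a (product, hence trivially PPT and trivially deterministic) super-operator, so the set of PPT codes contains the set of UA codes. Therefore $F_{\rm UA}(\cN,k) \le F_{\rm PPT}(\cN,k)$ for every $k$, yielding $Q^{(1)}(\cN,\ve) \le Q_{\rm PPT}^{(1)}(\cN,\ve)$ at every blocklength and every $\ve$, which passes to the asymptotic limit.

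For $Q_{\rm PPT}(\cN) \le Q_\G(\cN)$, I apply the one-shot bound to $\cN^{\ox n}$ and use additivity:
\begin{equation*}
Q_{\rm PPT}^{(1)}(\cN^{\ox n},\ve) \;\le\; Q_\G(\cN^{\ox n}) - \log(1-\ve) \;=\; n\, Q_\G(\cN) - \log(1-\ve).
\end{equation*}
Dividing by $n$ and taking $n\to\infty$ followed by $\ve\to 0$ gives the claim directly from the definition \eqref{asymptotic capacity}.

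The key remaining step is the strong converse. Let $\{(\cE_n,\cD_n)\}$ be any sequence of PPT codes of size $k_n = 2^{nR}$ with $R > Q_\G(\cN)$, and let $F_n$ be the corresponding channel fidelity, so that the infidelity is $\ve_n = 1 - F_n$. By the definition of $Q_{\rm PPT}^{(1)}$ in \eqref{oneshot capacity}, $nR = \log k_n \le Q_{\rm PPT}^{(1)}(\cN^{\ox n},\ve_n)$, which combined with the displayed inequality above gives
\begin{equation*}
nR \;\le\; n\, Q_\G(\cN) - \log(1-\ve_n),
\end{equation*}
i.e.\ $F_n = 1-\ve_n \le 2^{-n(R - Q_\G(\cN))}$. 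Since $R - Q_\G(\cN) > 0$, the fidelity decays to zero exponentially fast in $n$, which is precisely the strong converse property (and a fortiori strong converse for unassisted quantum communication, since unassisted codes are PPT codes). The only delicate point to watch is that the one-shot inequality was proved for fixed $\ve$, whereas in the strong converse argument $\ve_n$ depends on $n$; this is fine because the one-shot statement holds for every $\ve \in (0,1)$, and in particular one can instantiate it with $\ve = \ve_n$ at each $n$. I expect no other obstacle; the argument is a clean ``additivity + one-shot bound'' packaging, and the main conceptual work was already done in establishing the multiplicativity of $\G$.
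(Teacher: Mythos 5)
Your proposal is correct and follows essentially the same route as the paper: apply the one-shot inequality $Q_{\rm PPT}^{(1)}(\cN^{\ox n},\ve) \le Q_\G(\cN^{\ox n}) - \log(1-\ve)$ together with the additivity $Q_\G(\cN^{\ox n}) = nQ_\G(\cN)$, then read off the exponential decay of fidelity when $R > Q_\G(\cN)$. You are slightly more explicit than the paper in spelling out that $\ve_n$ depends on $n$ and that the one-shot inequality holds for each fixed $\ve$, and in recording the trivial inclusion of unassisted codes among PPT codes for the first inequality, but the underlying argument is identical.
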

\begin{proof}
We first show that $Q_\G(\cN)$ is a converse bound and then prove that it is a strong converse. From Eq. (\ref{oneshot q gamma bound}), take regularization on both sides, we have
\begin{equation}
\begin{split}
Q_{\rm PPT}(\cN) & = \lim_{\ve \rightarrow 0}\lim_{n\rightarrow \infty} \frac1n Q_{\rm PPT}^{(1)}(\cN^{\ox n},\ve)\\
& \leq \lim_{\ve \rightarrow 0}\lim_{n\rightarrow \infty} \frac1n\left[ Q_{\G}(\cN^{\ox n})-\log(1-\ve)\right]\\
& = Q_\G(\cN).
\end{split}
\end{equation}
In the last line, we use the additivity of $Q_\G$ in Proposition \ref{Q gamma add}.

For the quantum channel $\cN^{\ox n}$, suppose its achievable rate is $r$. From Eq. (\ref{oneshot q gamma bound}), we know that $nr \leq n Q_\G(\cN) - \log (1-\ve)$, which implies
\begin{align}
\ve \geq 1-2^{n(Q_\G(\cN) - r)}.
\end{align}
If $r > Q_\G(\cN)$, the error will exponentially converge to one as $n$ increases.
\end{proof}

\vspace{0.2cm}
\begin{remark}
For $d$-dimensional noiseless quantum channel $\cI_d$, we can show $Q(\cI_d) = Q_\G(\cI_d) = \log d$.
\end{remark}

\subsection{Comparison with other converse bounds}

There are several well-known converse bounds on quantum capacity. In this subsection, we compare them with our SDP strong converse bound $Q_\G$. 
%Notably, we obtain an inequality chain among the strong converse bound $Q_\G$, channel's Rains information $R$ and partial transposition bound $Q_\Theta$.
\begin{table}[H] 
\centering
\includegraphics[scale=1]{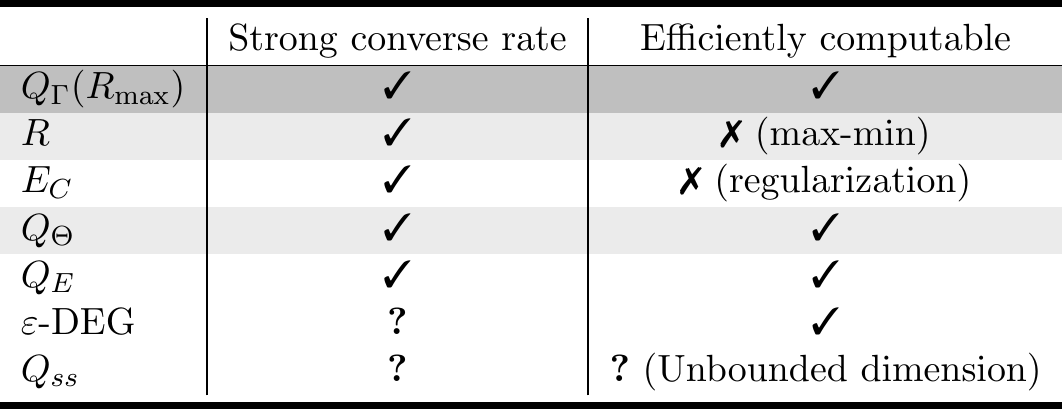}
\caption{Comparison of converse bounds on quantum capacity. %The check mark represents that the property holds. The cross mark represents that the property does not hold. The question mark represents the unknown result. 
	%The words in the bracket explain the difficulty that stops us to make it computable. The shaded rows indicate the bounds we particularly discuss in the following part.
}
\label{compare table}
\end{table} 

\textit{Tomamichel et al.}  \cite{Tomamichel2015a} established that the Rains information of any quantum channel is a strong converse rate for quantum communication. To be specific, the Rains information of a quantum channel is defined as \cite{Tomamichel2015a}:
\begin{alignat}{1}
R(\cN) :=\max_{\rho_A \in \cS(A)} \min_{\sigma_{AB} \in \text{PPT'}} D(\cN_{A'\rightarrow B}(\phi_{AA'}) \big\| \sigma_{AB}) ,
\end{alignat}
where $\phi_{AA'}$ is a purification of $\rho_A$ and the set $\text{PPT'} = \{\sigma \in \cP(A\ox B): \big\|\sigma^{T_B}\big\|_1 \leq 1\}$.
We note that our bound $Q_\Gamma$ is weaker than the Rains information (cf.~Corollary~\ref{QGamma Rains}).
However, $R(\cN)$ is not known to be efficiently computable for general quantum channels since it is a max-min optimization problem.

An efficiently computable converse bound (abbreviated as $\ve$-DEG) is given by the concept of approximate degradable channel~\cite{Sutter2014}. This bound usually works very well for approximate degradable quantum channels such as low-noise qubit depolarizing channel. See Ref.~\cite{Leditzky2017a,Sharma2017}  for some recent works based on this approach. Otherwise, it will degenerate to a trivial upper bound. We can easily show an example that $Q_\G$ can be smaller than $\ve$-DEG bound, e.g., the channel $\cN_r$ in Eq. (\ref{channel NR}) with $0<r<0.38$. Also, it is unknown whether $\ve$-DEG bound is a strong converse.

Another previously known efficiently computable strong converse bound for general channels is given by the partial transposition bound \cite{Holevo2001,Muller-Hermes2015},
\begin{align}
Q_{\Theta}(\cN):=\log \left\| \cN\circ T\right\|_{\di},
\end{align} 
where $T$ is the transpose map and  $\|\cdot\|_{\di}$ is the completely bounded trace norm. Note that which $\|\cdot\|_{\di}$ is known
to be efficiently computable via semidefinite programming in Ref.~\cite{Watrous2012}. 

The entanglement cost of a quantum channel~\cite{Berta2013}, denoted as $E_C$, is proved to be a strong converse bound. But it is not known to be efficiently computable for general channels, due to its regularization. The entanglement-assisted quantum  $Q_E$ is also a strong converse for the quantum capacity~\cite{Bennett2014,Berta2011a} and there is a recently developed approach to efficiently compute it~\cite{Fawzi2017a}. Quantum capacity with symmetric side channels~\cite{Smith2008a}, denoted as $Q_{ss}$, is also an important converse bound for general channels. But it is not known to be computable due to the potentially unbounded dimension of the side channel. It is also not known to be a strong converse.
%%%%%%%%%%%%%%%%%%%%%%%%%%%%%%%%%%%%%%%%%%%%%%%

%Recently, \emph{Christandl and M\"uller-Hermes}~\cite{Christandl2016} derived the following strong converse rate for both quantum and private communication:
%\begin{align}
%E_{\max}(\cN)=\max_{\rho_A \in \cS(A)} \min_{\sigma_{AB} \in \text{SEP}} D_{\max}(\cN_{A'\rightarrow B}(\phi_{AA'}) \big\| \sigma_{AB}),
%\end{align}
%where $\rm{SEP}$ represents the set of separable states. This bound is called the max-relative entropy of entanglement of a quantum channel. For quantum communication,  $E_{\max}$ improves the partial transposition bound for some channels but is weaker than the bound $Q_\Gamma$ (cf.~Proposition~\ref{QGamma EW}).

%%%%%%%%%%%%%%%%%%%%%%%%%%%%%%%%%%%%

\begin{theorem}
For any quantum channel $\cN$, we have
\begin{equation}
\label{inequality chain}
Q(\cN) \le {R}(\mathcal{N})\le Q_{\G}(\cN) \le Q_{\Theta}(\cN).\end{equation}
%and both inequalities can be strict.
\end{theorem}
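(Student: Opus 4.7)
I would proceed inequality by inequality. The first bound $Q(\cN)\le R(\cN)$ is exactly the quantum-capacity converse proved by Tomamichel--Wilde--Winter~\cite{Tomamichel2015a}, so I would just cite it. The middle bound $R(\cN)\le Q_\G(\cN)$ is Corollary~\ref{QGamma Rains}, which (via the identification of $Q_\G$ with the max-Rains information established earlier in this section) reduces to the elementary fact $D(\rho\|\sigma)\le D_{\max}(\rho\|\sigma)$ carried through the common $\min$--$\max$ optimization; no new ingredient is required.

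The only genuinely new inequality is $Q_\G(\cN)\le Q_\Theta(\cN)$, and I plan to prove it by comparing dual SDPs. Since $\cN\circ T$ is Hermitian-preserving with Choi operator $J_{\cN\circ T}=J_\cN^{T_A}$, a standard Watrous-style Lagrange dualization of
\begin{equation*}
\|\Phi\|_{\di}=\max\!\left\{\tr J_\Phi H\,:\,H=H^*,\;-\rho_A\ox\1_B\le H\le\rho_A\ox\1_B,\;\rho_A\ge 0,\;\tr\rho_A=1\right\}
\end{equation*}
(the primal SDP for the diamond norm of a Hermitian-preserving map, derived in the same way as the dual in the proof of Proposition~3) yields
\begin{equation*}
\|\cN\circ T\|_{\di}=\min\!\left\{\mu\,:\,V,Y\ge 0,\;V-Y=J_\cN^{T_A},\;\tr_B(V+Y)\le\mu\1_A\right\}.
\end{equation*}

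From any feasible $(V,Y,\mu)$ above I would take the \emph{full} transposes $V':=V^T$ and $Y':=Y^T$. Since the full transpose preserves positive semidefiniteness (unlike the partial transpose), $V',Y'\ge 0$; the identity $(X^T)^{T_B}=X^{T_A}$ then yields $(V'-Y')^{T_B}=(V-Y)^{T_A}=J_\cN$, which trivially satisfies $\ge J_\cN$; and $\tr_B(V'+Y')=(\tr_B(V+Y))^T\le\mu\1_A$. Hence $(V',Y',\mu)$ is feasible for the dual SDP~\eqref{dual WN} of $\G(\cN)$, so $\G(\cN)\le\mu$; minimizing over $\mu$ gives $\G(\cN)\le\|\cN\circ T\|_{\di}$, i.e.\ $Q_\G(\cN)\le Q_\Theta(\cN)$.

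The subtle point I foresee is the first step of the last argument: deriving the dual SDP for $\|\cN\circ T\|_\di$ itself. The familiar formula $\|\Phi\|_\di=\|\tr_B J_\Phi\|_\infty$ holds only for CP $\Phi$, so one must instead use the two-PSD-operator decomposition $J_\Phi=V-Y$ that appears in the Hermitian-preserving dual. Once that SDP is in hand, the ``full transpose on both dual variables'' bridge is a one-line manipulation, and it is precisely the failure of the partial transpose to preserve positive-semidefinite ordering that forces—and motivates—using the full transpose here.
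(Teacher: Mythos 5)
Your handling of the first two inequalities is exactly the paper's: cite \cite{Tomamichel2015a} for $Q(\cN)\le R(\cN)$ and invoke Corollary~\ref{QGamma Rains} (which in turn rests on Proposition~\ref{QGamma EW} and $D\le D_{\max}$) for $R(\cN)\le Q_\G(\cN)$. For the third inequality your argument is correct but runs on the opposite side of SDP duality from the paper's Proposition~\ref{QGmmma QTheta}. The paper works \emph{primal-to-primal}: it takes the optimal $\{R_{AB},\rho_A\}$ of the primal~\eqref{Gamma primal}, forms the \emph{partial} transpose $R_{AB}^{T_B}$, and uses the constraint $-\rho_A\ox\1_B\le R_{AB}^{T_B}\le\rho_A\ox\1_B$ together with a $2\times2$ block-PSD decomposition to exhibit $\{R_{AB}^{T_B},\rho_A,\rho_A\}$ as feasible in the Watrous primal~\eqref{cb norm}, so that $\G(\cN)=\tr J_\cN^{T_B}R_{AB}^{T_B}\le\|\cN\circ T\|_\di$. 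You instead work \emph{dual-to-dual}: derive the Hermitian-preserving dual of $\|\cN\circ T\|_\di$, take a feasible $(V,Y,\mu)$, apply the \emph{full} transpose (which, unlike $T_B$, preserves positivity) to get $(V^T,Y^T,\mu)$, and check feasibility in the dual~\eqref{dual WN} of $\G(\cN)$ via the identity $(X^T)^{T_B}=X^{T_A}$ and $J_{\cN\circ T}=J_\cN^{T_A}$. Both routes are sound and of comparable length; the paper's has the practical advantage of using the Watrous primal SDP it already cites without any further dualization, whereas yours makes explicit why the positivity of the two dual pieces $V,Y$ survives the transpose while $(V-Y)^{T_B}$ lands exactly on $J_\cN$ (the special case of equality in the constraint $(V-Y)^{T_B}\ge J_\cN$), and it cleanly isolates the role of the full transpose as the device that repairs the failure of $T_B$ to preserve PSD. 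The one prerequisite you correctly flag --- establishing the Hermitian-preserving dual SDP for the diamond norm --- is standard (it is in Watrous~\cite{Watrous2012}) but must be stated, since it is not the same SDP as~\eqref{cb norm}; once it is in place, the rest of your argument is a one-line verification.
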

The first inequality has been proved in Ref.~\cite{Tomamichel2015a}. 
We prove the second inequality in Corollary \ref{QGamma Rains} and the third inequality in Proposition \ref{QGmmma QTheta}.

\vspace{0.5cm}
In the following proof, we need to introduce an entanglement measure $E_W$ which is defined in Ref.~\cite{Wang2016}. We will see that the strong converse bound $Q_\G$ is a channel analogue of entanglement measure $E_W$ and can be further reformulated into a similar form as the Rains information.  Specifically, for any bipartite quantum state $\rho_{AB}$, the entanglement measure $E_W$ is defined by $E_W(\rho) :=\log W(\rho)$, where
\begin{alignat}{2}
& {(\rm Primal)} \quad && W(\rho)= \max \left\{ \tr  \rho R_{AB}: \left|R_{AB}^{T_{B}}\right| \le  \1, R_{AB}\ge0\right\}, \label{EW primal}\\
& \ {(\rm Dual)} && W(\rho)= \min \left\{\left\|X_{AB}^{T_B} \right\|_1:  X_{AB} \geq \rho_{AB}\right\}. \label{EW dual}
\end{alignat}
The max-relative entropy of two operators $\rho \in \cS_\leq (A)$, $\sigma \in \cP(A)$ is defined by~\cite{Datta2009}
\begin{align}
D_{\max}(\rho\|\sigma):= \log \min \{\mu: \rho \leq \mu \sigma\}.
\end{align}

\begin{proposition}
\label{QGamma EW}
For any quantum channel $\cN$, it holds that
\begin{align}Q_\G(\cN) = \max_{\rho_A \in \cS(A)} E_W(\cN_{A'\rightarrow B}(\phi_{AA'}))=\max_{\rho \in S(A)} \min_{\sigma \in \rm{PPT'}} D_{\max}(\cN_{A'\to B}( \phi_{AA'})\big\|\sigma_{AB}),\end{align}
where  $\phi_{AA'}$ is a purification of $\rho_A$ and  the set $\text{PPT'} = \left\{\sigma \in \cP(A\ox B): \big\|\sigma^{T_B}\big\|_1 \leq 1 \right\}$.
%As a consequence, $ Q_\G(\cN)\le E_{\max}(\cN)$.
\end{proposition}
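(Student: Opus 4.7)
The proof splits naturally into two equalities: first $Q_\G(\cN)=\max_{\rho_A}E_W(\cN(\phi_{AA'}))$, and second $E_W(\r)=\min_{\s\in\text{PPT}'}D_{\max}(\r\|\s)$. I would prove them separately and then chain them.

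For the first equality, the plan is to make the change of variables that converts the $W$-optimization for the state $\cN(\phi_{AA'})$ into the $\G$-optimization for the channel. Writing the purification as $\phi_{AA'}=(\r_A^{1/2}\ox\1)\proj{\Phi}(\r_A^{1/2}\ox\1)$ with $\ket{\Phi}=\sum_i\ket{ii}$, one has the identity
\begin{equation}
\cN_{A'\to B}(\phi_{AA'})=(\r_A^{1/2}\ox\1_B)\,J_{\cN}\,(\r_A^{1/2}\ox\1_B).
\end{equation}
For $\r_A>0$, I would substitute $\widetilde R_{AB}=(\r_A^{1/2}\ox\1_B)R_{AB}(\r_A^{1/2}\ox\1_B)$ into the primal $W$-problem (\ref{EW primal}). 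Using that $T_B$ commutes with conjugation by $\r_A^{1/2}\ox\1_B$, the constraint $|R^{T_B}|\le\1$ becomes exactly $-\r_A\ox\1_B\le\widetilde R^{T_B}\le\r_A\ox\1_B$, and $R\ge0$ becomes $\widetilde R\ge0$, while the objective transforms into $\tr J_\cN\widetilde R$. Thus $W(\cN(\phi_{AA'}))$ equals the $\G$-primal (\ref{Gamma primal}) with $\r_A$ fixed; maximizing over $\r_A\in\cS(A)$ gives $\G(\cN)$. The main obstacle is the invertibility issue when $\r_A$ is not full rank; I would handle this by approximating any optimal $\r_A$ by $(1-\eps)\r_A+\eps\1/d_A$, noting continuity of both sides in $\r_A$, and taking $\eps\to 0$.

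For the second equality, I would work directly from the definitions of $D_{\max}$ and the dual form of $W$ given in (\ref{EW dual}). Starting from
\begin{equation}
\min_{\s\in\text{PPT}'}D_{\max}(\r\|\s)=\log\min\left\{\m:\r\le\m\s,\;\s\ge0,\;\|\s^{T_B}\|_1\le1\right\},
\end{equation}
set $X=\m\s$. Then $\m\ge\|X^{T_B}\|_1$ and $X\ge\r$, so the minimum is $\log\min\{\|X^{T_B}\|_1:X\ge\r\}=\log W(\r)=E_W(\r)$ by (\ref{EW dual}). Conversely, for any $X\ge\r$ feasible in (\ref{EW dual}) with $X\ne 0$, setting $\s=X/\|X^{T_B}\|_1$ and $\m=\|X^{T_B}\|_1$ produces a feasible pair for the $D_{\max}$ minimization with the same value. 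This gives the second equality.

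Chaining the two equalities then yields the proposition. The nontrivial step is the first one: verifying that the change of variables really does preserve both feasibility conditions (positivity and the two-sided partial transpose bound), and taking care of rank-deficient $\r_A$ via a density/continuity argument. The second equality is essentially a reparametrization of the dual SDP for $W$, so once (\ref{EW dual}) is in hand it is routine.
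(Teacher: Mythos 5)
Your proof is correct and follows essentially the same route as the paper: the first equality is obtained via the same conjugation change of variables (the paper writes it in the inverse direction, substituting $F_{AB}=\rho_A^{-1/2}R_{AB}\rho_A^{-1/2}$ into the $\Gamma$-primal to recover the $E_W$-primal), and the second via the reparametrization $X=\mu\sigma$ in the dual SDP for $W$. The paper handles rank-deficient $\rho_A$ by the same density-of-invertible-states remark that you spell out a bit more explicitly.
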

\begin{proof}
Consider purification $\phi_{AA'} = \rho_{A}^{1/2}\Phi_{AA'}\rho_{A}^{1/2} (= \rho_{A'}^{1/2}\Phi_{AA'}\rho_{A'}^{1/2})$, then 
\begin{align}
\cN_{A'\rightarrow B}(\phi_{AA'}) = \cN_{A'\rightarrow B}(\rho_{A}^{1/2}\Phi_{AA'}\rho_{A}^{1/2}) = \rho_{A}^{1/2}\cN_{A'\rightarrow B}(\Phi_{AA'})\rho_{A}^{1/2} = \rho_{A}^{1/2}J_{\cN}\rho_{A}^{1/2}.\end{align}

Take $J_\cN = \rho_{A}^{-1/2}\cN_{A'\rightarrow B}(\phi_{AA'}) \rho_{A}^{-1/2}$ into the definition of $Q_\G(\cN)$ (\ref{Gamma primal}) and substitute $F_{AB} = \rho_{A}^{-1/2} R_{AB}\rho_{A}^{-1/2}$, we have
\begin{equation}\begin{split}
Q_\G(\cN)=\log  \max  & \tr \cN_{A'\rightarrow B}(\phi_{AA'}) F_{AB}\\
\text{s.t.} & \ F_{AB},\rho_{A}\ge0, \tr{\rho_{A}}=1, -\1_{AB} \le F_{AB}^{T_{B}} \le \1_{AB}
\end{split}\end{equation}
Note that here we only consider invertible state $\rho_{A}$. The reason is that the set of invertible positive operators is dense in the set of all positive semidefinite operators, and then it suffices to optimize with respect to them. 

Due to the definition of $E_W$ (\ref{EW primal}), we have 
\begin{equation}\begin{split}
Q_\G(\cN)=\max_{\rho_{A} \in \cS(A)} E_W(\cN_{A'\rightarrow B}(\phi_{AA'})).
\end{split}\end{equation}

On the other hand, the following equality chain holds
\begin{equation}
\begin{split}
E_W(\rho)&=\log \min \left\{\big\|X^{T_B}\big\|_1: \rho\le X\right\}\\
&=\log \min \left\{\mu :\rho \le X, \big\|X^{T_B}\big\|_1\leq \mu\right\}\\
&=\log \min \left\{\mu :\rho \le \mu \sigma,  \big\|\mu \sigma^{T_B}\big\|_1\leq \mu\right\}\\
&=\log \min \left\{\mu: \rho \le \mu \sigma,  \big\|\sigma^{T_B}\big\|_1\leq 1 \right\}\\
&= \min_{\sigma\in \text{PPT'}} D_{\max}(\rho\|\sigma).
\end{split}
\end{equation}
The first line follows from Eq. (\ref{EW dual}). In the second line, we introduce a new variable $\mu$. In the third line, we substitute $X$ with $\mu \sigma$. The last line follows from the definition of $D_{\max}$. This directly implies that $E_W(\rho)\ge R(\rho)$ (note, also \footnote{We note that Andreas Winter told us the fact that $E_W$ can be proved to be an upper bound of the Rains bound by some optimization techniques in the past.}).

Therefore, we have that
\begin{align}
Q_\G(\cN) &= \max_{\rho_A \in \cS(A)} E_W(\cN_{A'\rightarrow B}(\phi_{AA'}))\\
&=\max_{\rho \in S(A)} \min_{\sigma \in \rm{PPT'}} D_{\max}(\cN_{A'\to B}( \phi_{A'A})\big\|\sigma_{AB}).
\end{align}
%Furthermore, noticing that $\rm{SEP}\subset \rm{PPT'}$, we have that
%\begin{align}
%Q_\G(\cN) & \le \max_{\rho \in S(A)} \min_{\sigma \in \rm{SEP}} D_{\max}(\cN_{A'\to B}( \phi_{A'A})\big\|\sigma_{AB})\\
%& =E_{\max}(\cN).
%\end{align}
\end{proof}

We note that the max-relative entropy of entanglement of a quantum channel~\cite{Christandl2016} and our bound $Q_\G$ are in the same spirit, but for evaluating quantum communication, the bound in~\cite{Christandl2016} is weaker than  the bound $Q_\Gamma$ as well as the Rains information \cite{Tomamichel2015a}.

% \vspace{-0.5cm}
\begin{corollary}
\label{QGamma Rains}
For any quantum channel $\cN$, it holds that 
\begin{align}
{R}(\mathcal{N})\le Q_{\G}(\cN).
\end{align}    
\end{corollary}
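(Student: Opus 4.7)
The plan is to invoke the reformulation of $Q_\G$ already established in Proposition \ref{QGamma EW}, which writes
\[
Q_\G(\cN)=\max_{\rho_A \in \cS(A)} \min_{\sigma_{AB} \in \text{PPT'}} D_{\max}\bigl(\cN_{A'\to B}(\phi_{AA'})\,\big\|\,\sigma_{AB}\bigr),
\]
and compare it term by term with the Rains information
\[
R(\cN)=\max_{\rho_A \in \cS(A)} \min_{\sigma_{AB} \in \text{PPT'}} D\bigl(\cN_{A'\to B}(\phi_{AA'})\,\big\|\,\sigma_{AB}\bigr).
\]
Since both quantities are max-min expressions over the same feasible set (states $\rho_A$ and operators $\sigma_{AB}\in \text{PPT'}$), it suffices to compare the objective functions pointwise.

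The key input is the standard operator-theoretic fact that the max-relative entropy dominates the Umegaki relative entropy: for any $\rho\in\cS_\leq(A)$ and any $\sigma\in\cP(A)$,
\[
D(\rho\|\sigma)\le D_{\max}(\rho\|\sigma).
\]
This follows immediately from the definition: if $\mu=2^{D_{\max}(\rho\|\sigma)}$, then $\rho\le\mu\sigma$, and by operator monotonicity of the logarithm (applied after sandwiching), $\tr\rho(\log\rho-\log\sigma)\le\log\mu\cdot\tr\rho\le\log\mu$. I would just cite this well-known inequality rather than re-derive it, since it is a textbook result.

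Putting the pieces together, for any fixed $\rho_A$ with purification $\phi_{AA'}$ and any $\sigma_{AB}\in\text{PPT'}$,
\[
D\bigl(\cN_{A'\to B}(\phi_{AA'})\,\big\|\,\sigma_{AB}\bigr)\le D_{\max}\bigl(\cN_{A'\to B}(\phi_{AA'})\,\big\|\,\sigma_{AB}\bigr).
\]
Taking the minimum over $\sigma_{AB}\in\text{PPT'}$ on both sides preserves the inequality, and then taking the maximum over $\rho_A\in\cS(A)$ yields $R(\cN)\le Q_\G(\cN)$, which is the desired claim.

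There is essentially no obstacle here beyond invoking Proposition \ref{QGamma EW}: the real work was done in that proposition, which rewrote $Q_\G$ as a max-min of $D_{\max}$ over the same PPT$'$ set that defines the Rains information. Once both quantities are expressed as min-max's over identical domains with objectives differing only by $D$ vs.\ $D_{\max}$, the corollary reduces to the one-line fact $D\le D_{\max}$.
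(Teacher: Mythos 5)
Your proof is correct and follows essentially the same route as the paper: invoke Proposition~\ref{QGamma EW} to rewrite $Q_\G$ as a max-min over $D_{\max}$ on the same $\text{PPT}'$ domain that defines the Rains information, then apply the standard inequality $D(\rho\|\sigma)\le D_{\max}(\rho\|\sigma)$ (which the paper, like you, simply cites). The only cosmetic difference is that you sketch a justification of $D\le D_{\max}$, whereas the paper cites Datta; either is fine, and the substance of the argument is identical.
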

\begin{proof}
Note that $D(\rho\|\sigma) \leq D_{\max}(\rho\|\sigma)$~\cite{Datta2009}, we have 
\begin{equation}
\begin{split}
Q_\G(\cN) & = \max_{\rho \in S(A)} \min_{\sigma \in \rm{PPT'}} D_{\max}(\cN_{A'\to B}( \phi_{A'A})\big\|\sigma_{AB})\\
&\geq \max_{\rho_A \in \cS(A)} \min_{\sigma \in \text{PPT'}} D(\cN_{A'\rightarrow B}(\phi_{AA'}) \big\| \sigma_{AB}) = R(\cN).
\end{split}
\end{equation} 
\end{proof}

\vspace{-0.5cm}
\begin{proposition}
\label{QGmmma QTheta}
For any quantum channel $\cN$, it holds that $Q_{\G}(\cN) \le Q_{\Theta}(\cN)$.
\end{proposition}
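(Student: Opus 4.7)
The plan is to construct a feasible point for the dual SDP of $\Gamma(\cN)$ from an optimal solution of the diamond norm SDP for $\cN\circ T$. The key leverage is the observation that the dual SDP in \eqref{dual WN} involves a difference $V-Y$ of two positive operators whose partial transpose dominates $J_\cN$, while the diamond norm of a Hermiticity-preserving map admits an SDP where the Choi matrix is sandwiched between $\pm Z$ for some PSD $Z$.

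Specifically, I would first recall (from Watrous' SDP for the diamond norm) that for any Hermiticity-preserving map $\Phi$ with Choi matrix $J_\Phi$,
\begin{equation}
\|\Phi\|_\diamond \;=\; \min\bigl\{\,\|\tr_B Z\|_\infty \;:\; Z\ge J_\Phi,\; Z\ge -J_\Phi\,\bigr\},
\end{equation}
which follows from the standard block-matrix formulation after symmetrizing the variables $Z_0,Z_1$ (one can take $Z_0=Z_1=Z$ without loss of generality because the Choi matrix is Hermitian). Since the Choi matrix of $\cN\circ T$ is $J_\cN^{T_A}$, the change of variable $Z\mapsto Z^T$ (which preserves positivity, commutes with $\tr_B$ up to a transpose on $A$, and preserves the operator norm) transforms this into an equivalent SDP in terms of $J_\cN^{T_B}$:
\begin{equation}
\|\cN\circ T\|_\diamond \;=\; \min\bigl\{\,\|\tr_B Z\|_\infty \;:\; Z\ge J_\cN^{T_B},\; Z\ge -J_\cN^{T_B}\,\bigr\}.
\end{equation}

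Given any feasible $Z$ for the SDP on the right with $\mu := \|\tr_B Z\|_\infty$, I would define
\begin{equation}
V_{AB} := \tfrac{1}{2}\bigl(Z + J_\cN^{T_B}\bigr), \qquad Y_{AB} := \tfrac{1}{2}\bigl(Z - J_\cN^{T_B}\bigr).
\end{equation}
The constraints $Z\ge\pm J_\cN^{T_B}$ guarantee $V,Y\ge 0$. Moreover $V-Y = J_\cN^{T_B}$, so $(V-Y)^{T_B} = J_\cN \ge J_\cN$, and $V+Y = Z$, so $\tr_B(V+Y) = \tr_B Z \le \mu \1_A$. Thus $(V,Y,\mu)$ is feasible for the dual SDP \eqref{dual WN} of $\Gamma(\cN)$, yielding $\Gamma(\cN)\le \mu$. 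Minimizing over $Z$ gives $\Gamma(\cN)\le \|\cN\circ T\|_\diamond$, and taking logarithms proves $Q_\Gamma(\cN)\le Q_\Theta(\cN)$.

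The main subtlety is the directional bookkeeping of the partial transpose: the Choi matrix of $\cN\circ T$ naturally involves $T_A$, whereas the dual of $\Gamma(\cN)$ is phrased in terms of $T_B$. The equivalence of the two SDPs under the full transpose $Z\mapsto Z^T$ is what makes the elementary construction $V=\tfrac{1}{2}(Z+J_\cN^{T_B})$, $Y=\tfrac{1}{2}(Z-J_\cN^{T_B})$ work seamlessly; if one mismatches the sides one ends up with $J_\cN^T$ rather than $J_\cN$ in the constraint. Beyond that observation, the proof is a direct verification.
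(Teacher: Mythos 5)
Your proof is correct, but it travels the opposite direction to the paper's. The paper works entirely on the primal side: it takes the optimal $\{R_{AB},\rho_A\}$ of the primal SDP~\eqref{Gamma primal} for $\Gamma(\cN)$ and verifies that $\{R_{AB}^{T_B},\rho_A,\rho_A\}$ is feasible for the block-matrix (max-type) SDP~\eqref{cb norm} of the diamond norm, using the elementary decomposition
\begin{equation}
\begin{pmatrix}\rho_A\ox\1_B & R_{AB}^{T_B}\\ R_{AB}^{T_B} & \rho_A\ox\1_B\end{pmatrix}
=\frac{1}{2}\begin{pmatrix}1&1\\1&1\end{pmatrix}\ox\bigl(\rho_A\ox\1_B+R_{AB}^{T_B}\bigr)
+\frac{1}{2}\begin{pmatrix}1&-1\\-1&1\end{pmatrix}\ox\bigl(\rho_A\ox\1_B-R_{AB}^{T_B}\bigr)\ge 0.
\end{equation}
You instead go dual-to-dual: from a feasible $Z$ in the symmetrized ``sandwich'' dual $\min\{\|\tr_B Z\|_\infty:\;Z\ge\pm J_{\cN}^{T_B}\}$ of the diamond norm you build $V=\tfrac12(Z+J_\cN^{T_B})$ and $Y=\tfrac12(Z-J_\cN^{T_B})$ feasible for the dual~\eqref{dual WN} of $\Gamma(\cN)$. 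I checked the two ingredients you rely on that are not spelled out in the paper: (i) the symmetrization $Z_0=Z_1=Z$ in Watrous' dual is indeed justified (if $(Z_0,Z_1)$ is feasible so is $(Z_1,Z_0)$ because $J$ is Hermitian, and convexity lets you average), and (ii) the full transpose $Z\mapsto Z^T$ correctly converts $Z\ge\pm J_\cN^{T_A}$ into $Z^T\ge\pm J_\cN^{T_B}$ while preserving $\|\tr_B\cdot\|_\infty$. The paper's route is shorter given that Eq.~\eqref{cb norm} is already stated, whereas yours requires first establishing the sandwich dual; in exchange your construction avoids block matrices and makes the structural relationship transparent: the dual~\eqref{dual WN} of $\Gamma(\cN)$ is exactly the diamond-norm dual with the equality $V-Y=J_\cN^{T_B}$ relaxed to the one-sided constraint $(V-Y)^{T_B}\ge J_\cN$, which explains directly why $\Gamma(\cN)\le\|\cN\circ T\|_\diamond$.
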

\begin{proof}
Suppose the optimal solution of SDP (\ref{Gamma primal}) is taken at $\{R_{AB},\rho_A\}$, then 
$\G(\cN)=\tr J_{\cN} R_{AB}=\tr J_{\cN}^{T_B} R_{AB}^{T_B}$. The completely bounded trace norm can be written as SDP~\cite{Watrous2012}, 
\begin{align}\label{cb norm}
\left\| \cN \circ T \right\|_{\di}= \max \left\{ \frac{1}{2}\tr J_{\cN}^{T_B} (X+X^\dagger):
\left( {\begin{array}{*{20}{c}}
	{\rho_0 \otimes \1 }&X\\
	X^\dagger&{\rho_1 \otimes \1  }
	\end{array}} \right)\ge 0,\ \rho_0, \rho_1 \in \cS(A).\right\}
\end{align}
Since $-\rho_A \ox \1_B\le R_{AB}^{T_B}\le \rho_A \ox \1_B$, we have
\begin{align}
\begin{pmatrix}
\rho_A \otimes \1_B &R_{AB}^{T_B}\\
R_{AB}^{T_B}& \rho_A \otimes \1_B
\end{pmatrix}
=\frac{1}{2}\begin{pmatrix}
1&1\\1&1 \end{pmatrix} 
\bigotimes (\rho_A\otimes \1+R_{AB}^{T_B})
+\frac{1}{2}\begin{pmatrix}
1&-1\\-1&1\end{pmatrix} 
\bigotimes (\rho_A \otimes \1-R_{AB}^{T_B})\ge 0.
\end{align}
So $\left\{R_{AB}^{T_B},\rho_A,\rho_A\right\}$ is a feasible solution of SDP (\ref{cb norm}), which means that
\begin{align}
Q_\Theta(\cN) = \log \left\| \cN\circ T \right\|_{\di}\ge \log \tr (J_{\cN}^{T_B} R_{AB}^{T_B})= \log \G(\cN) = Q_\G(\cN).
\end{align}
\end{proof}

%%%%%%%%%%%%%%%%%%%%%%%%%%%%%%%%%%%%%%%%%%%%%%%%%%%%%%%%
\vspace{0.3cm}
In Fig.~\ref{spec3}, we compare the converse bound $Q_{\G}$ with $Q_{\Theta}$  in the case of quantum channel 
\begin{equation}\label{channel NR}
\cN_r=\sum_{i=0}^1 E_i\cdot E_i^\dagger, 
\end{equation}
where $E_0=\proj 0+\sqrt r\proj 1$ and $E_1=\sqrt{1-r}\ketbra 0 1+\ketbra 1 2$ $(0\le r\le 0.5)$. In the following Fig. \ref{spec3}, it is clear that $Q_{\G}(\cN)$ can be strictly tighter than $Q_{\Theta}(\cN)$.

\begin{figure}[H]
\center
\includegraphics[scale = 0.6]{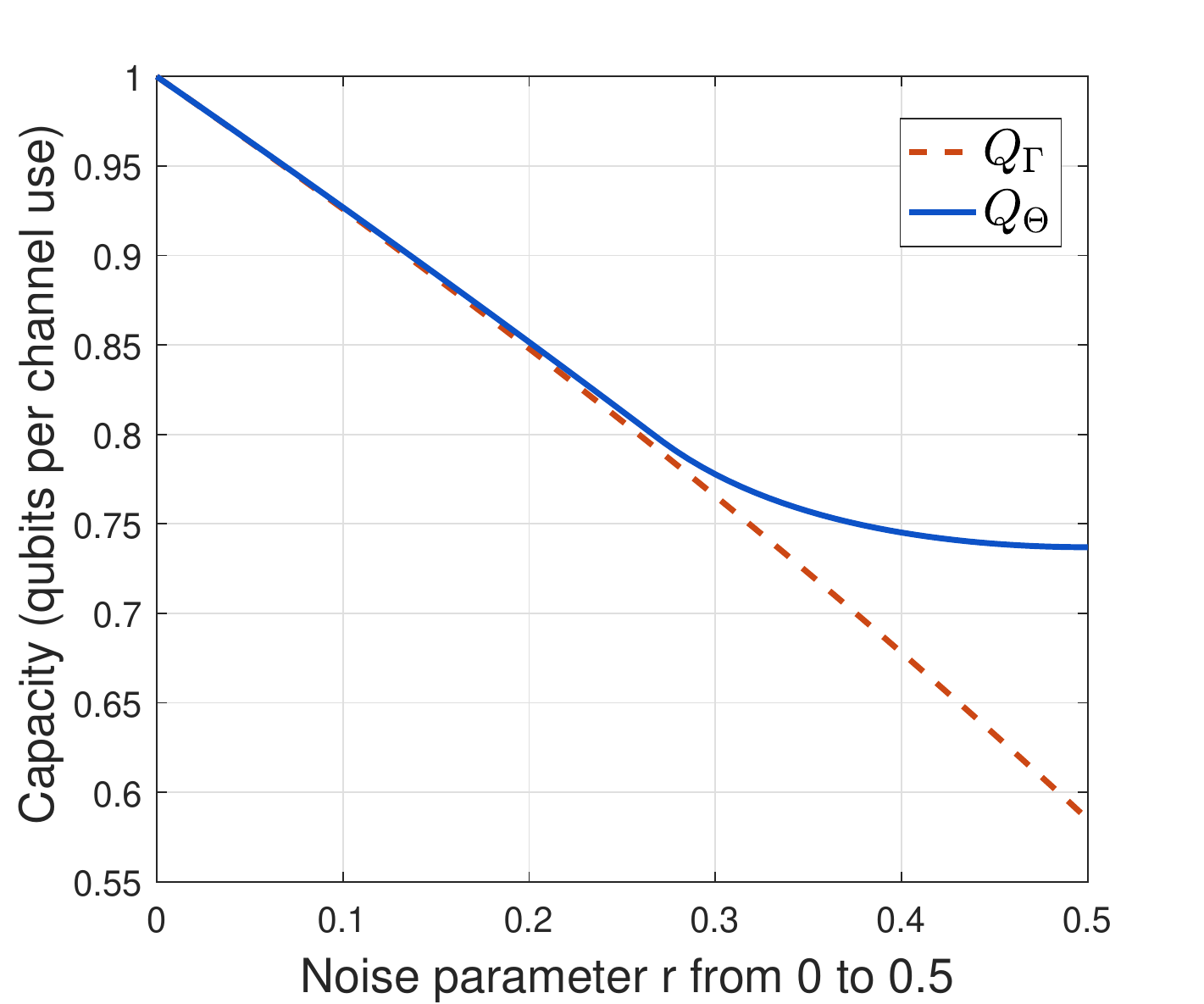}
\caption{This plot demonstrates the difference between converse bounds $Q_\G(\cN_r)$ and $Q_\Theta(\cN_r)$. The dashed line depicts $Q_\G(\cN_r)$ while the solid line depicts $Q_\Theta(\cN_r)$. The parameter $r$ ranges from $0$ to $0.5$.}
\label{spec3} 
\end{figure}

\section{Discussions}
In summary, we have derived efficiently computable converse bounds to evaluate the capabilities of quantum communication over quantum channels in both the non-asymptotic and asymptotic settings by utilizing the techniques of convex optimization.

We have provided one-shot converse bounds in the context of quantum communication with finite resources, which improves the previous general SDP converse bound in~\cite{Tomamichel2016}. Furthermore, in the asymptotic regime, we have derived an SDP strong converse bound $Q_\Gamma$ for quantum communication, which is better than the partial transpose bound~\cite{Holevo2001} as well as the max-Relative entropy of a channel \cite{Christandl2016}. Furthermore, we have refined the $Q_\Gamma$ as the so-called max-Rains information via connecting it to the SDP entanglement measure in~\cite{Wang2016}. It is worth noting that our bound is no better than the Rains information~\cite{Tomamichel2015a} in general, but it is the best SDP-computable strong converse bound. It is also worth noting that our bound $Q_\G$ was recently proved to be a strong converse bound for the LOCC-assisted quantum capacity in~\cite{Berta2017a}.

However, for the qubit depolarizing channel, the bound $Q_\G$ does not work very well. The best to date converse bound of this particular channel is still given by~\cite{Sutter2014,Leditzky2017,Smith2008b}. 
It is of great interest to use the one-shot SDP converse bound in Eq.~(\ref{PPT NS hat g}) to provide a potentially better upper bound on the quantum capacity of depolarizing channel. 

\section*{Acknowledgments}
We were grateful to Mario Berta, Felix Leditzky, Debbie Leung, Marco Tomamichel, Andreas Winter and Mark M. Wilde for helpful discussions.  XW
acknowledges support from the Department of Defense.
This work was mainly done when XW and RD were at UTS  and it was partly supported by the Australian Research Council (Grant No. DP120103776 and No. FT120100449). 
%%%%%%%%%%%%%%%%%%%%%%%%%%%%%%%%%%%%%%%%%

\bibliographystyle{IEEEtran}
\bibliography{Bib}

\end{document}